%

\documentclass[preprint,1p]{elsarticle}

\usepackage{framed,multirow}

\journal{Applied Numerical Mathematics}

\usepackage{amsthm}
\usepackage{amsmath}
\usepackage{amsfonts}
\usepackage{amssymb}
\usepackage{mathtools}
\usepackage{subfigure}
\usepackage{xcolor}

\setlength{\textwidth}{16cm}
\setlength{\oddsidemargin}{0in}

\graphicspath{{figures/}}

\bibliographystyle{model1-num-names}

\theoremstyle{plain}

\newtheorem{lemma}{Lemma}
\newtheorem{theorem}{Theorem}
\newtheorem{corollary}{Corollary}
\newtheorem{proposition}{Proposition}

\theoremstyle{definition}

\theoremstyle{remark}
\newtheorem{remark}{Remark}

\newcommand{\bome}{\boldsymbol\omega} 


\newcommand{\R}{\mathbb R}

\newcommand{\sinc}{\operatorname{sinc}}
\newcommand{\cfun}{\text{circ}}
\newcommand{\rect}{\operatorname{rect}}

\newcommand{\intinf}{\int_{-\infty}^{\infty}}
\newcommand{\dblintinf}{\int_{-\infty}^{\infty}\int_{-\infty}^{\infty}}
\newcommand{\intW}{\int_{-W}^{W}}
\newcommand{\dblintW}{\int_{-W}^{W}\int_{-W}^{W}}

\newcommand{\del}{\partial}
\newcommand{\ome}{\omega}

\newcommand{\twoFigWidth}{0.4}

\begin{document}


\begin{frontmatter}

\title{Diffraction integral computation using sinc approximation}

\author[1]{Max Cubillos\corref{CorrAuthor}} \ead{afrl.rdl.orgbox@us.af.mil}
\author[2]{Edwin Jimenez} \ead{jimenez@caltech.edu}

\cortext[CorrAuthor]{Corresponding author}

\address[1]{Directed Energy Directorate, Air Force Research Laboratory, 3550 Aberdeen Ave SE, Kirtland AFB, Albuquerque, New Mexico 87117, USA}
\address[2]{Department of Computing and Mathematical Sciences, California Institute of Technology, 1200 E California Blvd, MC 305-16, Pasadena, California 91125, USA}

\begin{abstract}
We propose a method based on sinc series approximations for computing the Rayleigh-Sommerfeld and Fresnel diffraction integrals of optics. The diffraction integrals are given in terms of a convolution, and our proposed numerical approach is not only super-algebraically convergent, but it also
satisfies an important property of the convolution---namely, the preservation
of bandwidth.  Furthermore, the accuracy of the proposed method depends only on
how well the source field is approximated; \emph{it is independent of
wavelength, propagation distance, and observation plane discretization}. In
contrast, methods based on the fast Fourier transform (FFT), such as the
angular spectrum method (ASM) and its variants, approximate the optical fields
in the source and observation planes using Fourier series. We will show that
the ASM introduces artificial periodic boundary conditions and violates the
preservation of bandwidth property, resulting in limited accuracy
which decreases for longer propagation distances.  The sinc-based approach
avoids both of these problems.  Numerical results are presented
for Gaussian beam propagation and circular aperture diffraction to demonstrate
the high-order accuracy of the sinc method for both short-range and long-range
propagation. For comparison, we also present numerical results obtained with
the angular spectrum method.  
\end{abstract}

\begin{keyword}
Fresnel diffraction \sep Rayleigh-Sommerfeld diffraction \sep angular spectrum method \sep sinc method
\end{keyword}

\end{frontmatter}

\section{Introduction} \label{sec:Introduction}

One of the most basic problems in optics is the propagation of a known scalar
optical field from one plane to another. Solutions to this problem are given in
terms of  various integrals---such as the Kirchoff, Rayleigh-Sommerfeld,
Fresnel, and Fraunhofer integrals---with differing regimes of validity. These
integrals have only a few analytical solutions, and therefore most optical
propagation problems rely on numerical methods.

A variety of methods have been developed to evaluate the integrals of optics
but the most popular are based on the convolution theorem and the FFT. In this
article, we will compare our proposed approach to the standard FFT-based
method: the angular spectrum method
(ASM)~\cite{Sherman1967Int,ShewellWolf1968,Goodman1996,schmidt_numerical_2010}.
The ASM is also known by other names in the literature (such as the transfer
function propagator~\cite{voelz_computational_2011}) and the name ``angular
spectrum method'' is sometimes also used to refer to different methods.  In
optical propagation the unknown field at some observation plane is written in
terms of a convolution of a diffraction kernel and a source field, and the ASM
simply performs this computation as the inverse discrete Fourier transform of
the product of the \emph{known} Fourier transform of the kernel and the
numerically computed Fourier transform of the source
field~\cite{voelz_computational_2011,schmidt_numerical_2010}.

The fundamental problem with the ASM---as with other FFT-based methods---is
that it approximates the optical field by a Fourier series, which introduces
artificial periodicity into the domain.  This artificial periodicity leads to
errors from modes in the solution that should be scattered to infinity but are
instead periodically reintroduced into the computational domain.  The most
straightforward way to reduce the errors caused by artificial periodicity is to
increase the size of the domain, but at the cost of additional computation.
There have been attempts to repair FFT-based methods, such as the band-limited
angular spectrum method~\cite{matsushima_band-limited_2009}; see also the
recent article~\cite{zhang_analysis_2020} and the references therein.
Unfortunately, all such modifications only mitigate the artefacts of artificial
periodicity under certain circumstances; they cannot eliminate the underlying
problem entirely.

The problems that arise from using a Fourier basis for an infinite domain are
naturally remedied by using a suitable set of basis functions such as the
Whittaker cardinal basis or sinc functions, which are analytic in the whole
complex plane.  Methods based on sinc functions have a long history, dating
back more than 100 years to the work of Borel~\cite{borel_sur_1897} and
Whittaker~\cite{whittaker_functions_1915}. Since then, sinc methods have been
used in a wide range of applications, including approximation theory, image
processing, and the numerical solution of integral and differential equations;
see the book~\cite{Stenger2012} and references therein for a review of
sinc-based methods. Although they have been used in a variety of computational
physics problems, to our knowledge, sinc methods have not been applied to the
problem of optical propagation. 

In this article, we propose a new method for computing diffraction integrals
based on approximation by sinc functions. For optical fields that are
approximately bandlimited and decay exponentially in the transverse spatial
direction, approximation by sinc functions is super-algebraically
convergent~\cite[Chapter 1]{Stenger2012}.  Furthermore, the only source of
error is the approximation of the optical field; the error for computing the
diffraction integral does not depend on wavelength, propagation distance, or
observation plane grid.  Although the method allows for arbitrary sample
spacing in the source and observation planes, when the sampling is the same in
both planes (a requirement of ASM), the FFT can be used to accelerate the
computation. Thus, in this case, our sinc method has the same computational
complexity as the ASM.  However, in the near field, for an $N$-point source
grid, our sinc-based algorithm can be reduced to $\mathcal O(N)$ complexity,
making it faster than any FFT-based method.

This article is organized as follows: Section~\ref{sec:Diffraction} reviews the
diffraction integrals covered in this article, emphasizing some of the
important properties and their relevance to numerical methods.
Section~\ref{sec:Pitfalls} discusses the ASM in the context of the arguments
made in the preceding section together with related partial differential
equations, showing why it is an inadequate method. Section~\ref{sec:SincSeries}
describes in detail the sinc-based method proposed in this article and shows
how it naturally prevails where the ASM falls short. Finally,
Section~\ref{sec:NumericalResults} presents numerical results showing, in
practice, the advantages of the sinc method over the ASM.

\section{Diffraction integrals and the convolution theorem} \label{sec:Diffraction}

We are interested in the propagation of a scalar optical field $u(x,y)$ from a
given source plane to an observation plane a distance $z$ away. We will use
lower case letters $(u, x, y)$ to denote fields and coordinates in the source
plane; upper case letters $(U, X, Y)$ will denote those quantities in the
observation plane. For brevity, we will denote the observation field $U(X,Y,z)$
by $U_z(X,Y)$ or simply $U(X,Y)$ when the distance from the source to the
observation plane is clear from the context. The solution to the propagation
problem is given by the Rayleigh-Sommerfeld diffraction integral
\begin{equation} \label{eq:RayleighSommerfeld}
  U(X,Y) = \frac{-1}{2\pi} \dblintinf \frac{\partial}{\partial z}\left(\frac{\exp \left(ik\sqrt{(X-x)^2 + (Y-y)^2 + z^2}\right)}{\sqrt{(X-x)^2 + (Y-y)^2 + z^2}}\right) u(x,y)\,dx\,dy,
\end{equation}
where $k = 2\pi / \lambda$ is the wavenumber and $\lambda$ is the wavelength.
Using the paraxial approximation (see, e.g.,~\cite{schmidt_numerical_2010}),
the Rayleigh-Sommerfeld integral can be approximated by the Fresnel diffraction
integral
\begin{equation} \label{eq:FresnelIntegral}
  U(X,Y) = \frac{-ik e^{ikz}}{2\pi z} \dblintinf e^{i\frac k{2z}((X-x)^2 + (Y-y)^2)} u(x,y)\,dx\,dy.
\end{equation}
Denoting the convolution of two functions $f$ and $g$ by
\begin{equation}
  (f \ast g)(X,Y) = \dblintinf f(X-x,Y-y) g(x,y)\,dx\,dy,
\end{equation}
we can write the Rayleigh-Sommerfeld and Fresnel diffraction integrals in the
compact form
\begin{equation}\label{eq:DiffractionConvolution}
  U(X,Y) = (h \ast u)(X,Y), 
\end{equation}
where $h$ is the diffraction kernel. The Rayleigh-Sommerfeld kernel is
given by
\begin{equation}
  h(x,y) = h_{RS}(x,y) = \frac{-1}{2\pi} \frac{\partial}{\partial z}\left(\frac{\exp \left(ik\sqrt{x^2 + y^2 + z^2}\right)}{\sqrt{x^2 + y^2 + z^2}}\right),
\end{equation}
and the Fresnel kernel is 
\begin{equation}
  h(x,y) = h_F(x,y) = \frac{-ik e^{ikz}}{2\pi z} \exp\left(\frac{ik}{2z}(x^2 + y^2) \right).
\end{equation}
Much of what follows will apply to both kernels, in which case we will use $h$
to denote either one; where specificity is required we will use the subscript
$RS$ or $F$.

The Fourier transform will play a crucial role in what follows. We will use the
operator notation $\mathcal F[f](\xi,\eta)$ (and $\mathcal F^{-1}$ for the
inverse) and $\widehat f(\xi,\eta)$ interchangeably for the Fourier transform
of a two-dimensional function $f(x,y)$:
\begin{align}
  \widehat f(\xi,\eta) = \mathcal F[f](\xi,\eta) 
    &= \dblintinf f(x,y) e^{-i 2\pi (x\xi+y\eta)} dx\,dy, \\
  f(x,y) = \mathcal F^{-1}[\widehat f](x,y) 
    &= \dblintinf \widehat f(\xi,\eta) e^{i 2\pi (x\xi+y\eta)} d\xi\,d\eta.
\end{align}
We will make frequent use of the well-known convolution theorem for Fourier transforms:
\begin{theorem}
Let $f$ and $g$ be functions with Fourier transforms $\widehat f$ and $\widehat
g$, respectively. Then
\begin{subequations}\label{eq:ConvThm}
\begin{align}
  \mathcal F[f \ast g](\xi,\eta)  &= \widehat f(\xi,\eta) \, \widehat g(\xi,\eta), 
    \label{eq:FofConv} \\
  \mathcal F[f \, g](\xi,\eta) &= \left( \widehat f \ast \widehat g \right) (\xi,\eta)
    \label{eq:FofProd}.
\end{align}
\end{subequations}
\end{theorem}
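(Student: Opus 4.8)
The plan is to establish \eqref{eq:FofConv} by direct computation from the definitions and then obtain \eqref{eq:FofProd} by a duality argument that exploits the near-symmetry between $\mathcal F$ and $\mathcal F^{-1}$. Throughout, I would assume enough regularity on $f$ and $g$ (e.g., $f,g \in L^1(\R^2)$) to justify the manipulations; the authors likely state the result informally, so the real content is the change of variables and the interchange of integrals rather than any deep structural fact.

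For \eqref{eq:FofConv}, first I would write out the left-hand side using the definition of the transform and then substitute the definition of the convolution, obtaining
\begin{equation*}
  \mathcal F[f \ast g](\xi,\eta) = \dblintinf \left( \dblintinf f(x-x',y-y')\,g(x',y')\,dx'\,dy' \right) e^{-i2\pi(x\xi + y\eta)}\,dx\,dy.
\end{equation*}
The key step is to invoke Fubini's theorem to interchange the order of integration, bringing the inner integral over $(x',y')$ to the outside. I would then apply the translation change of variables $u = x - x'$, $v = y - y'$ (for fixed $x',y'$), whose Jacobian is unity, so that the exponential factors as $e^{-i2\pi(u\xi + v\eta)}\,e^{-i2\pi(x'\xi + y'\eta)}$. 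This separates the double integral into the product of two independent transforms, namely $\widehat f(\xi,\eta)$ coming from the $(u,v)$ integral and $\widehat g(\xi,\eta)$ coming from the $(x',y')$ integral, which is exactly the claimed identity.

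For \eqref{eq:FofProd}, rather than repeat the computation I would derive it from \eqref{eq:FofConv} using the symmetry between the forward and inverse transforms. Since $\mathcal F$ and $\mathcal F^{-1}$ differ only by the sign in the exponent, the identity just proved also holds with $\mathcal F$ replaced by $\mathcal F^{-1}$, i.e. the inverse transform of a convolution is the product of inverse transforms. Applying $\mathcal F^{-1}$ to the product $\widehat f\,\widehat g$ and using Fourier inversion to identify $\mathcal F^{-1}[\widehat f] = f$ and $\mathcal F^{-1}[\widehat g] = g$ then yields $\mathcal F^{-1}[\widehat f \ast \widehat g] = f\,g$; transforming both sides back with $\mathcal F$ gives $\widehat f \ast \widehat g = \mathcal F[f\,g]$, as required. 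Alternatively, one can prove \eqref{eq:FofProd} by the same direct route as \eqref{eq:FofConv}, expanding $\widehat f \ast \widehat g$ in the frequency variables and applying the analogous shift.

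The main obstacle is not algebraic but analytic: justifying the interchange of the order of integration. This requires verifying that the integrand is absolutely integrable over $\R^4$, which follows from $f,g \in L^1$ together with the bound $|e^{-i2\pi(x\xi+y\eta)}| = 1$, so that $\lvert f(x-x',y-y')\,g(x',y')\rvert$ is integrable and Fubini applies. Once the integrability hypothesis is in place, every remaining step---the unit-Jacobian translation, the factoring of the exponential, and the recognition of the resulting factors as $\widehat f$ and $\widehat g$---is routine.
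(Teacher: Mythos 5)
Your proof is correct, but there is no in-paper argument to compare it against: the paper's ``proof'' of this theorem is a one-line citation to~\cite[Chapter 5]{DebnathMikusinskiBook}. What you have written is, in effect, the standard textbook argument that the citation outsources---for \eqref{eq:FofConv}, the route via Fubini, the unit-Jacobian translation $u = x - x'$, $v = y - y'$, and the factoring of the exponential is exactly the classical one, and your justification of the interchange (absolute integrability of $|f(x-x',y-y')\,g(x',y')|$ over $\R^4$ when $f,g \in L^1$, since the exponential has unit modulus) is the right one. One analytic caveat in your duality derivation of \eqref{eq:FofProd}: the hypothesis $f,g \in L^1$ alone does not suffice there. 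The identification $\mathcal F^{-1}[\widehat f] = f$ requires Fourier inversion, hence $\widehat f \in L^1$ (or $f$ in a nicer class), and the product of two $L^1$ functions need not be integrable, so $\mathcal F[f\,g]$ may not even be defined under your stated assumption. Your opening disclaimer that you assume ``enough regularity'' covers this informally---and the paper states the theorem with no hypotheses at all, so you are no less rigorous than the source---but a clean statement of the second identity would assume $\widehat f, \widehat g \in L^1$ (whence $f$ and $g$ are bounded and continuous and $f g \in L^1$), or work in $L^2$, or simply in the Schwartz class, where both identities and all interchanges are immediate. With that hypothesis made explicit, both halves of your argument are complete.
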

\begin{proof}
See~\cite[Chapter 5]{DebnathMikusinskiBook}.
\end{proof}
It can be shown that the Fourier transform of the Rayleigh-Sommerfeld kernel is
given by~\cite{lalor_conditions_1968} 
\begin{equation} \label{eq:RSFT}
  \widehat h_{RS}(\xi,\eta) = \exp\left(i [k^2 - 4\pi^2(\xi^2+\eta^2)]^{1/2} z\right),
\end{equation}
where a branch of the complex square root $z^{1/2}$ is chosen such that 
\begin{align}
  z^{1/2} = 
    \begin{cases} 
      \sqrt{x}, & z = x \in (0,\infty) \\ 
      i\sqrt{|x|}, & z = x \in (-\infty,0)
    \end{cases}.
\end{align}
The Fourier transform of the Fresnel kernel, on the other hand, is
\begin{equation} \label{eq:FresnelFT}
  \widehat h_F(\xi,\eta) = e^{ikz} \exp\left(-i\frac{2\pi^2z}{k}(\xi^2 + \eta^2)\right).
\end{equation}
\begin{remark}\label{rem:hAmp}
Note that $\widehat h_F$ has unit amplitude for all $\xi$ and $\eta$, whereas
$\widehat h_{RS}$ has unit amplitude only for $4\pi^2(\xi^2+\eta^2) \leq k^2$.
\end{remark}

The angular spectrum method is based on an application of~\eqref{eq:FofConv} to 
equation~\eqref{eq:DiffractionConvolution} so that
\begin{equation}
  U(X,Y) = \mathcal F^{-1} \left[ \widehat h \, \widehat u \right](X,Y).
\end{equation}
Thus, to compute either~\eqref{eq:RayleighSommerfeld}
or~\eqref{eq:FresnelIntegral}, the ASM requires that we compute $\widehat{u}$
numerically, multiply it by~\eqref{eq:RSFT} or~\eqref{eq:FresnelFT}, and then
apply the inverse discrete Fourier transform to the result.

\medskip

We note that the convolution theorem has two particularly important
consequences for \emph{bandlimited functions}, which we define as follows. A
function $f$ has bandwidth $W$ if $(-W,W)^2$, $W > 0$, is the smallest square that contains the support of $\widehat{f}$, and it is bandlimited if $W < \infty$.  Similarly, when the function itself (as opposed to its Fourier transform) vanishes outside of a square $(-D,D)^2$, $D > 0$, we say that it has support width $D$. For convenience, we record two simple consequences of~\eqref{eq:ConvThm} for the convolution of two functions.
\begin{corollary} \label{cor:Bandwidth}
Let $f$ and $g$ be functions with bandwidths $W_f$ and $W_g$ and support widths
$D_f$ and $D_g$, respectively, which may be infinite. Then
\begin{enumerate}
  \item $f \ast g$ has bandwidth $\min(W_f,W_g)$.
  \item $f \ast g$ has support width $D_f + D_g$.
\end{enumerate}
\end{corollary}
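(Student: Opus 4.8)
The plan is to treat the two claims separately, deriving each from one half of the convolution theorem~\eqref{eq:ConvThm} together with an elementary support argument. Throughout I would read the bandwidth $W_f$ as the smallest half-width for which $\operatorname{supp}\widehat f \subseteq (-W_f,W_f)^2$, and the support width $D_f$ analogously for $\operatorname{supp} f$. For claim (1), I would start from~\eqref{eq:FofConv}, which gives $\widehat{f\ast g} = \widehat f\,\widehat g$. Since the support of a pointwise product is contained in the intersection of the two supports, $\operatorname{supp}(\widehat f\,\widehat g)\subseteq \operatorname{supp}\widehat f\cap\operatorname{supp}\widehat g \subseteq (-W_f,W_f)^2\cap(-W_g,W_g)^2 = (-\min(W_f,W_g),\min(W_f,W_g))^2$. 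Hence the smallest square containing $\operatorname{supp}\widehat{f\ast g}$ has half-width at most $\min(W_f,W_g)$, which is the ``$\leq$'' direction.

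For claim (2), I would work directly from the integral definition of the convolution rather than from~\eqref{eq:FofProd}. The integrand of $(f\ast g)(X,Y)$ is nonzero for some $(x,y)$ only if simultaneously $(x,y)\in\operatorname{supp} g$ and $(X-x,Y-y)\in\operatorname{supp} f$. Writing $(X,Y)=(x,y)+(X-x,Y-y)$ then exhibits every point of $\operatorname{supp}(f\ast g)$ as a sum of a point of $\operatorname{supp} g$ and a point of $\operatorname{supp} f$, so $\operatorname{supp}(f\ast g)$ is contained in the Minkowski sum $\operatorname{supp} f+\operatorname{supp} g \subseteq (-(D_f+D_g),D_f+D_g)^2$, giving support width at most $D_f+D_g$. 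Equivalently, this is the Fourier dual of claim (1): under the transform, support widths and bandwidths exchange roles, and~\eqref{eq:FofProd} turns the intersection of supports into a Minkowski sum.

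The main obstacle is upgrading these containments to the equalities asserted in the statement, since the intersection and Minkowski-sum bounds only yield ``$\leq$''. The reverse inequalities can fail if cancellation or exact vanishing occurs at the extreme frequencies (for (1)) or the extreme positions (for (2)). I would therefore establish the ``$\geq$'' direction under the mild non-degeneracy that each factor is nonvanishing in a neighborhood of the boundary of the other's support square. This is precisely the regime relevant to the diffraction kernels: by Remark~\ref{rem:hAmp}, $\widehat h_F$ and $\widehat h_{RS}$ have nonzero amplitude throughout the passband of a bandlimited source field, so that $\min(W_h,W_u)=W_u$ and the source bandwidth is exactly preserved by the convolution.
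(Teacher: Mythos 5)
Your containment arguments are correct, but it is worth knowing that the paper's own ``proof'' is a single sentence (``This follows easily from our definitions of bandwidth and support width''), so what you have written is the fleshed-out version of the intended argument: \eqref{eq:FofConv} plus $\operatorname{supp}(\widehat f\,\widehat g)\subseteq\operatorname{supp}\widehat f\cap\operatorname{supp}\widehat g$ for item (1), and the Minkowski-sum bound $\operatorname{supp}(f\ast g)\subseteq\operatorname{supp}f+\operatorname{supp}g$ for item (2). Where you genuinely go beyond the paper is in flagging that these arguments only deliver the ``$\leq$'' direction, and that the equalities asserted in the corollary need a non-degeneracy hypothesis; this caveat is warranted, since the literal statement is false in general (take $\widehat g$ vanishing on $\operatorname{supp}\widehat f$ for item (1)). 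Your patch for item (1) --- each factor nonvanishing, say continuously, on a neighborhood of the boundary of the other's spectral square --- does yield equality, and your appeal to Remark~\ref{rem:hAmp} correctly identifies why this is harmless for the paper's application: $\widehat h_F$ and $\widehat h_{RS}$ never vanish, so the source bandwidth is exactly preserved, which is the only equality the paper actually uses downstream.

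One caution on item (2): your non-degeneracy condition does not rescue equality there, even in principle. First, the lower bound on the support of a convolution is delicate because of cancellation; making it rigorous requires something like the Titchmarsh--Lions convolution theorem, not just pointwise non-vanishing near the boundary. Second, with the paper's \emph{symmetric-square} definition of support width, equality can fail outright for reasons having nothing to do with cancellation: if $\operatorname{supp}f$ touches its square only near $(D_f,0)$ and $\operatorname{supp}g$ only near $(-D_g,0)$, then $f\ast g$ is supported near $(D_f-D_g,0)$ and has support width about $|D_f-D_g|$, not $D_f+D_g$. Fortunately the paper invokes item (2) only with $D_h=\infty$ --- to conclude that $U=h\ast u$ has unbounded support --- and that conclusion follows more robustly from item (1): $U$ is a nonzero bandlimited function, hence (by Paley--Wiener type analyticity) cannot vanish outside any bounded square. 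So your proof is sound where the paper needs it, and your skepticism about the stated equalities is better justified than the paper's one-line dismissal; only the suggested equality argument for item (2) should be dropped or replaced by the Titchmarsh-type statement with a convex-hull (rather than symmetric-square) formulation.
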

\begin{proof}
This follows easily from our definitions of bandwidth and support width.
\end{proof}
Since the bandwidth of a convolution is equal to the smaller bandwidth of the
two functions being convolved, if Corollary~\ref{cor:Bandwidth} is applied to
either the Rayleigh-Sommerfeld or Fresnel diffraction integral, it follows that
\emph{optical propagation preserves bandwidth}: the observation plane field $U$
will have the same bandwidth as the source plane field $u$. Furthermore, for
Fresnel diffraction, because $\widehat h_F$ has unit amplitude, not only is the
bandwidth preserved but the amplitude of $\widehat U$ at each frequency is the
same as $\widehat u$. In other words, Fresnel propagation preserves the
frequency-space envelope.

Conversely, the corollary also implies that \emph{optical propagation does not
preserve support width}; even if $u$ has bounded support, convolution with
$h_{RS}$ or $h_F$ (both of which have unbounded support) will cause $U$ to have
unbounded support. Ideally, a numerical method for computing the diffraction
integrals should respect these properties of the convolution.

\section{Pitfalls of Fourier series approximations of diffraction integrals} \label{sec:Pitfalls}

As previously mentioned, the diffraction integrals have the effect of taking an
optical field $u$ with bounded support and transforming it into a function
$U$ with unbounded support. To understand the nature of this transformation,
we consider certain properties of the related partial differential equations
and how they relate to their respective integral solutions.

\subsection{The Fresnel case} \label{sec:FresnelFourier}
Suppose that the field $U$ propagates along the $+z$-direction so that it takes
the form
\begin{equation}\label{eq:Ueikz}
  U(x,y,z) = e^{ikz} \psi(x,y,z),
\end{equation}
where $\psi$ is a slowly-varying envelope function.  The paraxial/parabolic
wave equation (see, e.g.,~\cite[Chapter 11]{Blackledge2005digital}) for $\psi$
is obtained under the assumption that $U$ satisfies the Helmholtz equation
\begin{equation}\label{eq:Helmholtz}
  \Delta U + k^2 U = 0,
\end{equation}
where $\Delta = \del_x^2 + \del_y^2 + \del_z^2$ denotes the Laplacian, and
$\psi$ varies slowly with respect to $z$ so that
\begin{equation}\label{eq:SlowPsiAssump}
  \left| \frac{\del^2 \psi}{\del z^2} \right| \ll
  2 k \left| \frac{\del \psi}{\del z} \right|.
\end{equation}
Then, the Helmholtz equation and assumptions~\eqref{eq:Ueikz}
and~\eqref{eq:SlowPsiAssump} lead to the paraxial equation
\begin{subequations}\label{eq:Paraxial}
\begin{alignat}{2} 
  -2ik \frac{\del \psi}{\del z} &= \Delta_{\perp} \psi, 
   &\qquad &(x,y,z) \in \R^2 \times (0,\infty), \label{eq:ParaxialEqn} \\
   \psi(x,y,0) &= u(x,y), &&(x,y) \in \R^2, \label{eq:ParaxialIC} 
\end{alignat} 
\end{subequations} 
where $\Delta_{\perp} = \del_x^2 + \del_y^2$ is the transverse Laplacian and
the source field $u(x,y)$ is taken as an initial condition. Note that the
Fresnel integral, and the assumed form for $U$ given in~\eqref{eq:Ueikz},
yields a solution $\psi$ to~\eqref{eq:Paraxial}.

\medskip

Letting $\psi$ equal a single Fourier mode, $\psi = \exp(2\pi i(\xi x + \eta y
- \omega z))$, and substituting in~\eqref{eq:ParaxialEqn}, we obtain the
dispersion relation
\begin{equation}
  \omega = \frac{\pi}k (\xi^2 + \eta^2).
\end{equation}
Now, taking the magnitude of the gradient of $\ome$ with respect to $\xi$ and
$\eta$, we get the group velocity
\begin{equation}
  v_g = |\nabla_{\xi,\eta}\omega| = \frac{2\pi}k \sqrt{\xi^2 + \eta^2},
\end{equation}
which is the speed at which a wave packet will travel in the transverse
direction.

We observe that the group velocity is directly proportional to spatial
frequency. This means that the higher the spatial frequencies present, the
further they will travel in the transverse direction.  That is why, no matter
how tightly focused an initial beam, for example, there will always be some
sort of distortion and spreading.  Additionally, we note that the dispersion
relation has no imaginary part for all frequencies---i.e., waves travel off to
infinity in the transverse direction with no dissipation. This is another
manifestation of the fact we noted earlier that Fresnel propagation preserves
the frequency-space envelope.  If periodic boundary conditions are imposed
together with equation~\eqref{eq:Paraxial} , however, the transverse waves
obviously do not disperse to infinity; they keep circling around in the domain
forever.

\medskip

Next, we will demonstrate that solving~\eqref{eq:Paraxial} with periodic
boundary conditions is equivalent to approximating the Fresnel diffraction
integral with the ASM.

For some real $D > 0$, let the domain be $[-D/2,D/2]^2$ and suppose that the
initial condition is given by the finite Fourier series
\begin{equation} \label{eq:InitialCondition}
  u(x,y) = \sum_{m=-N/2}^{N/2} \sum_{n=-N/2}^{N/2} \widehat v_{mn} 
           e^{i2\pi (\xi_m x + \eta_n y)},
\end{equation}
where $\xi_m = m/D$, $\eta_n = n/D$, for $m,n=-N/2,\dotsc,N/2$, and $N$ is an
even positive integer. Then, we can also write the solution $\psi$
to~\eqref{eq:Paraxial} as a Fourier series
\begin{equation} \label{eq:PsiSeries}
  \psi(x,y,z) = \sum_{m=-N/2}^{N/2} \sum_{n=-N/2}^{N/2} \widehat \psi_{mn}(z) 
                  e^{i2\pi (\xi_m x + \eta_n y)}.
\end{equation}
Substituting the expression~\eqref{eq:PsiSeries} into the paraxial equation, we
find that each Fourier coefficient in the series satisfies
\begin{equation}
  -2ik \frac{d \widehat \psi_{mn}}{d z} = (2\pi)^2 (\xi_m^2 + \eta_n^2) \widehat \psi_{mn}.
\end{equation}
Integrating from $0$ to $z$, we get
\begin{align}
  \widehat \psi_{mn}(z) = \exp\left( -i \frac{2\pi^2}k (\xi_m^2 + \eta_n^2) z \right) 
                          \widehat v_{mn},
\end{align}
and the solution $U(x,y,z)$ can be recovered by multiplying the
series~\eqref{eq:PsiSeries} by $e^{ikz}$ which, using the notation discussed in
Section~\ref{sec:Diffraction}, becomes
\begin{equation} \label{eq:ParaxialSolution}
  U(X,Y) = e^{ikz} \sum_{m=-N/2}^{N/2} \sum_{n=-N/2}^{N/2} \widehat v_{mn} \exp\left( -i \frac{2\pi^2}k (\xi_m^2 + \eta_n^2) z \right) e^{ i2\pi (\xi_m x + \eta_n y)}.
\end{equation}

On the other hand, invoking the convolution theorem, the ASM computes the
Fresnel integral~\eqref{eq:FresnelIntegral} as
\begin{equation}\label{eq:ASMInvFourier}
\begin{split}
  U(X,Y) &= \mathcal F^{-1} \left[ \widehat h_F \, \widehat u \right](X,Y) \\
         &= e^{ikz} \dblintinf e^{i2\pi (\xi X + \eta Y)} 
                           \exp\left( -i\frac{2\pi^2z}k (\xi^2 + \eta^2) \right) 
                            \widehat u(\xi,\eta)\,d\xi\,d\eta. 
\end{split}
\end{equation}
The first step of the ASM is to represent the initial profile $u$ by a Fourier
series over $[-D/2,D/2]^2$ and by $0$ elsewhere.  The Fourier transform
of~\eqref{eq:InitialCondition} is given by
\begin{equation} \label{eq:SincSeries}
  \widehat u(\xi,\eta) = \sum_{m=-N/2}^{N/2} \sum_{n=-N/2}^{N/2} D^2 \, \widehat v_{mn}\,
                         \sinc\left[ D(\xi - \xi_m)   \right] 
                         \sinc\left[ D(\eta - \eta_n) \right],
\end{equation}
where we use the normalized sinc function
\begin{equation}\label{eq:NormalizedSinc}
  \sinc(x) = \frac{\sin(\pi x)}{\pi x},
\end{equation}
and we have used the fact that the transform of a Fourier component is a scaled
and shifted sinc function. It follows that the Fourier transform is
unbounded---i.e., \emph{a function with bounded support has infinite
bandwidth}. However, we note that 
\begin{equation}
  \widehat u( \xi, \eta ) =
  \begin{cases} 
    D^2\, \widehat v_{mn}, & (\xi,\eta) = (\xi_m,\eta_n) \\
    0,                     &\text{otherwise}
  \end{cases} 
\end{equation}
where $\xi_m = m/D$, $\eta_n = n/D$, for $m,n=-N/2,\dotsc,N/2$.  Using this
fact, the ASM approximates the inverse Fourier transform in
equation~\eqref{eq:ASMInvFourier} by the composite trapezoidal rule sampled at
the nodes $(m/D,n/D)$ so that
\begin{align}
  U(X,Y) &= e^{ikz} \dblintinf e^{i2\pi (\xi X + \eta Y)} 
            \exp \left(-i\frac{2\pi^2z}k (\xi^2 + \eta^2) \right) 
            \widehat u(\xi,\eta)\,d\xi\,d\eta \\
   &\approx e^{ikz} \frac 1{D^2} \sum_{m=-N/2}^{N/2} \sum_{n=-N/2}^{N/2} 
      e^{i2\pi (\xi_m X + \eta_n Y)} \exp \left(-i\frac{2\pi^2z}k (\xi_m^2 + \eta_n^2) \right) 
      \widehat u(\xi_m,\eta_n) \label{eq:Trapezoidal} \\
   &= e^{ikz} \sum_{m=-N/2}^{N/2} \sum_{n=-N/2}^{N/2} \widehat v_{mn} 
      e^{i2\pi (\xi_m X + \eta_n Y)} \exp \left(-i\frac{2\pi^2z}k (\xi_m^2 + \eta_n^2) \right),
\end{align}
which is the solution given in equation~\eqref{eq:ParaxialSolution}.

\medskip

We have seen that the continuous version of the Fresnel diffraction integral
has the effect of dispersing waves off to infinity in the transverse direction,
and the speed of these waves is proportional to their frequency. Unlike the
continuous version, however, the ASM imposes periodic boundary conditions; no
waves are dispersed to infinity, instead they are circled back into the domain.
Since the diffraction integrals eventually disperse all frequencies to
infinity, the error of the ASM grows with propagation distance. 

Perhaps more concerning is that the ASM violates the properties of the
convolution given in Corollary~\ref{cor:Bandwidth}. Even if the initial optical
field $u$ is bandlimited, the Fourier series approximation ignores this
property in favor of approximating it by a function of bounded support. This
function is no longer bandlimited, but its Fourier transform is again
approximated by a Fourier series. The diffraction integral turns this Fourier
series into a function with unbounded support---which is again truncated by a
Fourier series approximation. 

\subsection{The Rayleigh-Sommerfeld case}
We will again assume that $U$ takes the form $U(x,y,z) = e^{ikz} \psi(x,y,z)$.
For the Rayleigh-Sommerfeld integral, it can be shown that $\psi$ is the
solution of the following pseudodifferential Helmholtz propagator
equation~\cite{keefe_when_2018},
\begin{subequations}\label{eq:HelmholtzPropagator}
\begin{alignat}{2} 
   \frac{\del \psi}{\del z} &= \left(-ik + i\sqrt{k^2 + \Delta_{\perp}}\right) \psi, 
   &\qquad &(x,y,z) \in \R^2 \times (0,\infty), \label{eq:HelmholtzPropEqn} \\
   \psi(x,y,0) &= u(x,y), &&(x,y) \in \R^2, \label{eq:HelmholtzPropIC} 
\end{alignat} 
\end{subequations} 
The pseudodifferential operator is defined through the Fourier transform as
\begin{equation}
  \sqrt{k^2 + \Delta_{\perp}} \psi(x,y) \coloneqq \dblintinf 
                                        \sqrt{k^2 - 4 \pi^2 (\xi^2 + \eta^2)} \, 
                  \widehat \psi(\xi,\eta) \, e^{i2\pi(x\xi + y\eta)} d\xi\,d\eta,
\end{equation}
where the square root under the integral is defined with a branch cut such that
it is positive-real on the positive real axis and positive-imaginary on the
negative real axis.

As before, by taking $\psi$ to be a single Fourier mode, we obtain the
dispersion relation
\begin{equation}
  \omega = \sqrt{\frac{k^2}{4\pi^2} - \xi^2 - \eta^2} - \frac k{2\pi},
\end{equation}
and the group velocity
\begin{equation}
  v_g = |\nabla_{\xi,\eta}\omega| = \frac{\sqrt{\xi^2 + \eta^2}}
                                         {\sqrt{\frac{k^2}{4\pi^2} - \xi^2 - \eta^2}}.
\end{equation}

The Rayleigh-Sommerfeld case is more interesting in that the group velocity
increases without bound up to the \emph{finite} wavenumber $\xi^2 + \eta^2 =
k^2/(4\pi^2)$, at which point the group velocity is infinite. Whereas the group
velocity in the Fresnel case may still be small when $k$ is large compared to
spatial frequency content, controlling the group velocity in the Rayleigh case
is more difficult due to this singular behavior.

It can also be shown that ASM applied to the Rayleigh-Sommerfeld integral is
equivalent to solving the Helmholtz propagator equation with periodic boundary
conditions. As noted above, this situation is more concerning, since transverse
waves travel faster, with unbounded velocity in a finite range of frequencies.

\section{Diffraction integral approximation using sinc series} \label{sec:SincSeries}

As we have seen, the ASM is equivalent to imposing periodic boundary conditions on
the wave propagation problem, which leads to erroneous recycling of waves into
the domain which should be dispersed off to infinity.  The fundamental problem
is the attempt to represent the optical field by a function (a Fourier series)
with bounded support, which by Corollary~\ref{cor:Bandwidth} is transformed
into a function with unbounded support and is therefore no longer representable
using the same approximation.

Bandwidth, however, is preserved by the diffraction integrals, so we might
expect that the problems of the ASM could be eliminated by instead discretizing
the problem using a more suitable set of basis functions---one that is
naturally suited to problems on an infinite domain and that can approximate
bandlimited functions well. One such basis that satisfies these properties are
sinc functions, also known as the Whittaker cardinal basis. The key to such a
representation is the famous Shannon-Whittaker sampling theorem:
\begin{theorem}
Let $f(x)$ be a bandlimited function with bandwidth $W$. Then $f$ can be
represented exactly by the series
\begin{equation} 
  f(x) = \sum_{n=-\infty}^{\infty} f_n \sinc \left( \frac{x - x_n}{\Delta x} \right),
\end{equation} 
where $\Delta x = \frac 1{2W}$, $x_n = n\Delta x$, $f_n = f(x_n)$, and
$\sinc(x) = \sin(\pi x) / (\pi x)$.
\end{theorem}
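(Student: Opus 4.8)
The plan is to exploit the defining feature of a bandlimited function: since $f$ has bandwidth $W$, its Fourier transform $\widehat f$ is supported on the interval $(-W,W)$. The strategy is to expand $\widehat f$ in a Fourier series on this interval, recognize that the Fourier coefficients are precisely the samples $f_n$ up to a constant, and then invert the transform to recover the sinc series.

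First I would write down the inversion formula. Because $\widehat f$ vanishes outside $(-W,W)$, the inverse Fourier transform reduces to a finite integral,
\begin{equation}
  f(x) = \int_{-W}^{W} \widehat f(\xi)\, e^{i 2\pi x \xi}\, d\xi.
\end{equation}
Next, regarding $\widehat f$ as a function on $[-W,W]$ and extending it periodically with period $2W$, I would expand it in a Fourier series,
\begin{equation}
  \widehat f(\xi) = \sum_{n=-\infty}^{\infty} c_n\, e^{i \pi n \xi / W},
  \qquad
  c_n = \frac{1}{2W} \int_{-W}^{W} \widehat f(\xi)\, e^{-i \pi n \xi / W}\, d\xi.
\end{equation}
The crucial observation is that this coefficient integral is exactly the inversion formula evaluated at a sample point: with $x_n = n\,\Delta x = n/(2W)$, one reads off $c_n = \frac{1}{2W} f(-x_n) = \frac{1}{2W} f_{-n}$.

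I would then substitute the Fourier series for $\widehat f$ back into the inversion formula, insert $c_n = f_{-n}/(2W)$, and relabel the summation index $n \mapsto -n$. After interchanging summation and integration this yields
\begin{equation}
  f(x) = \sum_{n=-\infty}^{\infty} f_n \cdot \frac{1}{2W}\int_{-W}^{W} e^{i 2\pi \xi (x - x_n)}\, d\xi,
\end{equation}
and the remaining integral is elementary. Using $\Delta x = 1/(2W)$, the identity
\begin{equation}
  \frac{1}{2W}\int_{-W}^{W} e^{i 2\pi \xi (x - x_n)}\, d\xi
   = \frac{\sin\!\big(2\pi W (x-x_n)\big)}{2\pi W (x-x_n)}
   = \sinc\!\left(\frac{x-x_n}{\Delta x}\right)
\end{equation}
is the computational heart of the proof and immediately produces the claimed series.

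The main obstacle is analytic rather than algebraic: one must justify the termwise integration and the convergence of the reconstructed series. This requires specifying the regularity of $f$ precisely---for instance assuming $\widehat f \in L^1(-W,W)$ (or $L^2$), so that the Fourier series of $\widehat f$ converges in the appropriate sense and the interchange of sum and integral is legitimate, e.g.\ by dominated convergence or an $L^2$/Parseval argument. A secondary subtlety is the treatment of the endpoints $\pm W$: defining the bandwidth via the \emph{open} interval $(-W,W)$ ensures that $\widehat f$ carries no point mass at the boundary, which is what makes the periodic extension and its Fourier series well defined. Under these hypotheses the interchange is valid and the sinc expansion holds.
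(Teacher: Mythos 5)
Your proof is correct and takes essentially the same route as the paper: both expand $\widehat f$ in a Fourier series on $[-W,W]$, identify the Fourier coefficients with the samples $f_n/(2W)$, and invert the transform termwise, with the elementary integral producing the sinc functions. The only differences are cosmetic---you read off $c_n$ directly from the inversion formula (modulo a sign/relabeling) while the paper inverts first and then relates $c_n = \Delta x\, f_n$---plus your added attention to the $L^1/L^2$ hypotheses and interchange of sum and integral, which the paper leaves implicit.
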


The proof of this theorem relies on the assumption that the Fourier transform
of a bandlimited function $f$ can be represented by a Fourier series on the
interval $[-W,W]$:
\begin{equation}
  \widehat f(\xi) = \sum_{n=-\infty}^{\infty} c_n e^{-i \frac{\pi n}W \xi}, \quad \xi \in [-W,W].
\end{equation}
Taking the Fourier transform, we have
\begin{align}
  f(x) &= \intinf \widehat f(\xi) e^{i2\pi x \xi} \\
       &= \sum_{n=-\infty}^{\infty} c_n \intW e^{i\pi\left( 2x - \frac nW \right) \xi} d\xi \\
       &= \sum_{n=-\infty}^{\infty} c_n 2W \sinc( 2Wx - n ).
\end{align}
It follows that the Fourier coefficients of $\widehat f(\xi)$ and the samples
$f_n = f(x_n)$ are related by
\begin{equation}
  c_n = \frac{f_n}{2W} = \Delta x\, f_n.
\end{equation}

Of course, for computational purposes we cannot hope to represent a function
with an infinite number of samples. Therefore we truncate the sinc series as
\begin{equation} 
  f(x) \approx \sum_{n=-N/2}^{N/2} f_n \sinc \left( \frac{x - x_n}{\Delta x} \right),
\end{equation} 
for some even positive integer $N$. We will call such a function a
\emph{bandlimited function of order $N$}.

\subsection{Quadrature rules for diffraction integrals} \label{sec:QuadratureRules}

In this section we propose a method for obtaining quadrature rules that are
exact on the observation grid for bandlimited functions of order equal to the
number of sample points.

The method starts by approximating $u$ as a bandlimited function of bandwidth $W$
and order $N$. We assume for simplicity that the bandwidth and order are the
same in each dimension, with the spatial discretization given by $\delta =
1/(2W) = \Delta x = \Delta y$. Letting $x_j = j\delta$ and $y_{\ell} =
\ell\delta$ be the grid points on the source plane and $u_{j\ell} =
u(x_j,y_{\ell})$ be the values of $u$ at those points, we can write $u$ as
\begin{equation}
  u(x,y) = \sum_{j=-N/2}^{N/2} \sum_{\ell=-N/2}^{N/2} u_{j\ell} \phi_j(x) \phi_\ell(y),
\end{equation}
where we have written the sinc cardinal functions concisely as
\begin{equation}
  \phi_j(x) = \sinc \left( \frac{x - x_j}{\delta} \right), \quad\quad \phi_{\ell}(y) = \sinc \left( \frac{y - y_{\ell}}{\delta} \right).
\end{equation}
Then the diffraction integral is given by
\begin{equation} \label{eq:SincSolution}
  U(X,Y) = (h \ast u)(X,Y) = \sum_{j=-N/2}^{N/2} \sum_{\ell=-N/2}^{N/2} u_{j\ell} (h \ast (\phi_j \phi_{\ell}))(X,Y).
\end{equation}
The solution is simply the weighted sum of the diffraction integrals of the
cardinal functions. We may write it in terms of the diffraction integral of a
single basis function centered at the origin. Let
\begin{equation} \label{eq:PhiConvolution}
  \Phi(X,Y) = \dblintinf h(X-x,Y-y) \sinc \left( \frac x{\delta} \right) \sinc \left( \frac y{\delta} \right) dx\,dy.
\end{equation}
Using the convolution theorem and the fact that the Fourier transform of a sinc
function is given by
\begin{equation}
  \mathcal F \left[ \sinc \left( \frac x{\delta} \right) \right](\xi) = \delta \rect(\xi \delta), \quad\quad \rect(x) =
    \begin{cases}
      1, & -1/2 \leq x \leq 1/2, \\
      0, & \textrm{otherwise},
    \end{cases}
\end{equation}
we can also write $\Phi(X,Y)$ as an inverse Fourier transform,
\begin{equation} \label{eq:PhiFourier}
  \Phi(X,Y) = \delta^2 \dblintW \widehat h(\xi,\eta) e^{i 2\pi (X \xi + Y \eta)} d\xi\,d\eta,
\end{equation}
where the finite integral comes from the fact that the Fourier transform of the
cardinal function is zero outside of the square $[-W,W]^2$. From this form it is
clear that $\Phi(X,Y)$ has bandwidth $W$, thus preserving the bandwidth of the
cardinal function. It follows that
\begin{equation}
  (h \ast (\phi_j \phi_{\ell}))(X,Y) = \Phi(X-x_j,Y-y_{\ell}),
\end{equation}
and we can write the solution~\eqref{eq:SincSolution} as
\begin{equation}
  U(X,Y) = \sum_{j=-N/2}^{N/2} \sum_{\ell=-N/2}^{N/2} u_{j\ell} \Phi(X-x_j,Y-y_{\ell}).
\end{equation}

Now suppose we wish to evaluate the diffraction integral above at a set of
observation points $(X_m,Y_n)$, which need not be the same as the source grid.
Denoting the diffraction quadrature weights by
\begin{equation} \label{eq:DiffractionWeights}
  w_{mj;\,n\ell} = \Phi(X_m-x_j,Y_n-y_{\ell}),
\end{equation}
it follows that the diffraction integral for an order $N$ bandlimited function at
the points $(X_m,Y_n)$ are given \emph{exactly} by
\begin{equation} \label{eq:DiffractionQuad}
  U(X_m,Y_n) = \sum_{j=-N/2}^{N/2} \sum_{\ell=-N/2}^{N/2} w_{mj;\,n\ell} u_{j\ell}.
\end{equation}
To distinguish between the diffraction integrals, we denote the Fresnel weights
by $w^F_{mj;n\ell}$ and the Rayleigh-Sommerfeld weights by $w^{RS}_{mj;n\ell}$.
Since there is no closed-form solution to the Rayleigh-Sommerfeld weights,
these can be computed with any high-order quadrature, such as Fej\'{e}r's first
quadrature rule~\cite{Waldvogel2006} that we use in the numerical results
section.  On the other hand, the Fresnel weights do have a closed-form analytical
expression, which will be derived in the following subsection. 

\subsubsection{The Fresnel weights} \label{sec:FresnelWeights}

Because the Fresnel kernel is the product of one-dimensional functions, we may
write the function $\Phi(X,Y)$ as the product of one-dimensional weights:
\begin{equation} \label{eq:PhiProduct}
  \Phi(X,Y) = e^{ikz} \varphi(X) \varphi(Y),
\end{equation}
where the function $\varphi$ can be expressed in two ways:
\begin{align}
  \varphi(X) &= e^{-i\frac{\pi}4} \sqrt{\frac k{2\pi z}} \intinf e^{i \frac k{2z} (X-x)^2} 
                \sinc \left( \frac x{\delta} \right) \,dx \label{eq:PhiPhysical}, \\
  \varphi(X) &= \delta \intW \exp\left( -i\frac{2\pi^2z}k \xi^2 \right) e^{i 2\pi X \xi}  d\xi. 
                \label{eq:LittlePhiFourier}
\end{align}
The one-dimensional function $\varphi$ has an analytical expression in terms of
the Fresnel cosine and sine integral functions $C(x)$ and $S(x)$ which are defined as
\begin{equation}
  C(x) = \int_0^x \cos(\mu^2)\,d\mu, \quad\quad S(x) = \int_0^x \sin(\mu^2)\,d\mu.
\end{equation}
Using the change of variable $\mu = \pi\sqrt{(2z)/k} \, \xi - \sqrt{k/(2z)} \,
X$, we have
\begin{align}
  \varphi(X) &= \delta \intW \exp \left( -i\frac{2\pi^2z}k \xi^2 \right) e^{i 2\pi X \xi} d\xi \\
          &= \delta \intW \exp \left[ -i\frac{2\pi^2z}k \left(\xi -  \frac{X k}{2\pi z} \right)^2 + i\frac{X^2 k}{2z} \right] d\xi \\
          &= \frac{\delta}{\pi} \sqrt{\frac k{2z}} \exp \left( i\frac{X^2 k}{2z} \right) \int_{\mu_1}^{\mu_2} \exp( -i\mu^2 )\, d\mu \\
          &= \frac{\delta}{\pi} \sqrt{\frac k{2z}} \exp \left( i\frac{X^2 k}{2z} \right) \bigg( C(\mu_2) - C(\mu_1) - i \big( S(\mu_2) - S(\mu_1) \big) \bigg),
\end{align}
where
\begin{equation}
  \mu_1 = -\pi\sqrt{\frac{2z}k} W - \sqrt{\frac k{2z}} X, \quad\quad \mu_2 = \pi\sqrt{\frac{2z}k} W - \sqrt{\frac k{2z}} X.
\end{equation}
It follows that we may write the Fresnel quadrature weights as
\begin{equation}\label{eq:FresnelWeights}
  w^F_{mj;\,n\ell} = e^{ikz} \omega^x_{mj} \omega^y_{n\ell},
\end{equation}
where
\begin{equation}
  \omega^x_{mj} = \varphi(X_m - x_j) \quad\text{ and }\quad 
  \omega^y_{n\ell} = \varphi(Y_n - y_{\ell}).
\end{equation}
%

\subsection{Computing the diffraction quadratures} \label{sec:ComputingQuadrature}

We note that equation~\eqref{eq:DiffractionQuad} can be written as a ``scaled
discrete convolution,''~\cite{nascov_fast_2009} which can be computed with FFT
speed by using an algorithm based on the fractional Fourier
transform~\cite{bailey_fractional_1991}.

However, if the source and observation grids use the same spatial
discretization $\delta$, then equation~\eqref{eq:DiffractionQuad} becomes a
discrete convolution, where the weights can be written as the difference
between corresponding indices in each dimension---i.e.,
\begin{align}
  w_{mj;n\ell} &= w_{m-j,n-\ell} \nonumber \\
    &= \delta^2 \dblintW \widehat h(\xi,\eta) e^{i 2\pi \delta ((m - j) \xi + (n - \ell) \eta)} d\xi\,d\eta.
\end{align}
The resulting discrete convolution can be computed directly using zero-padded
FFTs of size $2N$. In this case, the sinc-based algorithm has the same
computational complexity as the ASM. 

In the case of Fresnel diffraction, the algorithm can also be written as two
matrix multiplications. Let $\mathbf u$, $\mathbf U$, $\bome^x$, and $\bome^y$
denote matrices whose entries are given by
\begin{alignat}{2}
  \mathbf u_{j\ell} &= u(x_j,y_{\ell}), &\qquad \mathbf U_{mn} &= U(X_m,Y_n), \nonumber \\
  \bome^x_{mj} &= \ome^x_{mj},          & \bome^y_{n\ell} &= \ome^y_{n\ell}. \label{eq:OmegaMatrices}
\end{alignat}
Then equation~\eqref{eq:DiffractionQuad} can be written compactly as
\begin{equation}\label{eq:FresnelMxM}
  \mathbf U = e^{ikz} \boldsymbol \omega^x \, \mathbf u \, \left( \boldsymbol \omega^y \right)^T,
\end{equation}
where the supersript $T$ denotes the matrix transpose (no conjugation!).
Although the FFT has a faster computational complexity than matrix
multiplication, in practice it is only faster for large enough grids. As is,
the above Fresnel algorithm encapsulated in~\eqref{eq:FresnelMxM} is simpler
and faster for small grids. In the next section, we will see how the algorithm
can be even faster for large Fresnel numbers, where the algorithm reduces to
banded matrix multiplication.

\subsection{Further truncation of the quadrature weights} \label{sec:KernelTrunc}

In the previous section we showed that the diffraction quadrature formulas lead
to a fast algorithm, since the discrete convolution can be computed in
$\mathcal O(N\log N)$ time using the fast Fourier transform. We emphasize that
the only approximation involved is in the optical field $u$; we assume that it
can be accurately represented by a bandlimited function of order $N$.  Given
such an approximation, the subsequent discrete quadrature formula is equal to
the continuous convolution at the observation points, regardless of wavelength
or propagation distance; there is no approximation in this step.

In this section, however, we show how the weights can be further truncated to
just a few terms when the Fresnel number is large (i.e., when $(D^2/(\lambda
z))$ is large, where $D$ is the spatial extent of the optical field) and when
the observation grid is the same as the source grid. The result is an even
faster $O(N)$ algorithm for computing the convolution for any fixed error
tolerance. Recalling the method of stationary phase, the highly oscillatory
kernels in the large Fresnel number regime lead us to expect that most of the
contribution to the diffraction integrals will come from a small neighborhood
of the point $(X,Y)$. The following lemma rigorously establishes this argument.

\begin{lemma} \label{lem:Asymptotic}
The function $\varphi(X)$ given by equation~\eqref{eq:LittlePhiFourier} has the
following asymptotic behavior as $Xk\delta/(\pi z) \rightarrow \infty$:
\begin{align}
  \varphi(X) = \frac{\delta}{\pi X} \Bigg[ \exp \left( -i \frac{\pi^2 z}{2 k \delta^2} \right) \frac 1{1 - i\,z/(kX^2)} \left( \sin \big(\pi X/\delta\big) - i \frac{\pi z}{Xk\delta} \cos(\pi X/\delta) \right) + \mathcal O \Big( \big( \pi z/(Xk\delta) \big)^2 \Big) \Bigg]
\end{align}
\end{lemma}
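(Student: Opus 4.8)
\section*{Proof proposal}

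The plan is to read \eqref{eq:LittlePhiFourier} as a \emph{non-stationary} oscillatory integral and extract its asymptotics from the two endpoints by integration by parts. Write the phase as $\phi(\xi) = -\tfrac{2\pi^2 z}{k}\xi^2 + 2\pi X\xi$, so that $\varphi(X) = \delta\int_{-W}^{W} e^{i\phi(\xi)}\,d\xi$. Its unique stationary point is $\xi^\ast = Xk/(2\pi z)$, and since $W = 1/(2\delta)$ one has $\xi^\ast/W = Xk\delta/(\pi z)$; thus the stated limit $Xk\delta/(\pi z)\to\infty$ says precisely that the stationary point recedes far outside the window $[-W,W]$. Consequently there is no interior stationary-phase contribution and the whole expansion is produced at $\xi=\pm W$. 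I would set the small parameter $\epsilon := \pi z/(Xk\delta) = W/\xi^\ast$, note that $\phi'(\pm W) = 2\pi X(1\mp\epsilon)$ is bounded away from zero throughout $[-W,W]$ as $\epsilon\to 0$, and record the second small quantity $z/(kX^2) = \epsilon\,\delta/(\pi X)$, which is $O(\epsilon)$ in the regime $\delta \lesssim X$.

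First I would integrate by parts twice, using $e^{i\phi} = (i\phi')^{-1}\tfrac{d}{d\xi}e^{i\phi}$ at each step. Because $\phi$ is quadratic, $\phi'' = -4\pi^2 z/k$ is constant and $\phi'''=0$, so the successive boundary terms collapse to the clean forms $\bigl[e^{i\phi}/(i\phi')\bigr]_{-W}^{W}$ and $-\phi''\bigl[e^{i\phi}/(i\phi')^3\bigr]_{-W}^{W}$, times elementary constants, leaving a remainder of the shape $\int_{-W}^{W} e^{i\phi}(\phi')^{-4}\,d\xi$. Evaluating the phase at the endpoints with $W = 1/(2\delta)$ gives $\phi(\pm W) = -\pi^2 z/(2k\delta^2) \pm \pi X/\delta$, so the common prefactor $\exp\!\bigl(-i\pi^2 z/(2k\delta^2)\bigr)$---exactly the factor displayed in the lemma---factors out, while the two endpoints contribute $e^{\pm i\pi X/\delta}$, which recombine into $\sin(\pi X/\delta)$ and $\cos(\pi X/\delta)$.

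The decisive step is the bookkeeping. Expanding the first boundary term to first order yields $\bigl(\sin(\pi X/\delta) - i\epsilon\cos(\pi X/\delta)\bigr)/(1-\epsilon^2)$, and since $1-\epsilon^2 = 1 + O(\epsilon^2)$ this gives $\sin(\pi X/\delta) - i\epsilon\cos(\pi X/\delta) + O(\epsilon^2)$. The second boundary term (the one carrying $\phi''$) contributes, at leading order, $+\,i\,\bigl(z/(kX^2)\bigr)\sin(\pi X/\delta)$. Adding the two and recognizing $1 + i\,z/(kX^2) = \bigl(1 - i\,z/(kX^2)\bigr)^{-1} + O\!\bigl((z/kX^2)^2\bigr)$ repackages both the $\sin$ and the $\cos$ pieces under the single rational prefactor $\bigl(1 - i\,z/(kX^2)\bigr)^{-1}$, with the discarded cross terms of sizes $O(\epsilon^2)$, $O(\epsilon\,z/(kX^2))$, and $O((z/kX^2)^2)$ all folding into the claimed $O\!\bigl((\pi z/(Xk\delta))^2\bigr)$. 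Restoring the overall $\delta/(\pi X)$ then reproduces the stated formula exactly.

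The hard part will be controlling the remainder so that the error is genuinely $O(\epsilon^2)$. After two integrations by parts the remainder carries $(\phi'')^2 \sim (z/k)^2$ against $(\phi')^{-4}$; using $|\phi'|\ge 2\pi X(1-\epsilon)$ on all of $[-W,W]$ together with the interval length $2W = 1/\delta$, a direct bound shows its size relative to the leading $O(1/X)$ term is $O\!\bigl(\epsilon\, z/(kX^2)\bigr)$, which is $O(\epsilon^2)$ precisely in the intended regime $z/(kX^2)\lesssim \epsilon$, i.e.\ $\delta \lesssim X$ (the off-diagonal regime where this estimate is actually applied to truncate the weights to a band). A secondary hazard is the consistent tracking of the two distinct small parameters; as an independent cross-check I would substitute the classical large-argument asymptotics of the Fresnel integrals---legitimate because $\mu_1,\mu_2$ both tend to infinity in magnitude as $\epsilon\to 0$---into the closed-form expression for $\varphi$ in terms of $C$ and $S$ derived in Section~\ref{sec:FresnelWeights}, which should regenerate the same expansion term by term.
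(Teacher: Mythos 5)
Your proposal is correct, and it reaches the lemma by the same basic mechanism as the paper---twofold integration by parts on the chirped, non-stationary oscillatory integral over $[-W,W]$, with all contributions coming from the endpoints---but the two IBP decompositions differ in a meaningful way. The paper rescales to $\mu=\xi/W$ and integrates the \emph{linear} factor $e^{iA\mu}$ while differentiating the chirp $e^{-i\kappa\mu^2/2}$; after the second integration by parts the integral $\tfrac{1}{2}\int_{-1}^{1}e^{-i\kappa\mu^2/2}e^{iA\mu}\,d\mu$ regenerates $\varphi(X)$ itself with coefficient $i\varepsilon/A = i\,z/(kX^2)$, and solving this self-consistent identity produces the rational prefactor $\bigl(1-i\,z/(kX^2)\bigr)^{-1}$ \emph{exactly}, with no expansion. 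You instead integrate the full phase via $e^{i\phi}=(i\phi')^{-1}(e^{i\phi})'$, which yields endpoint denominators $\phi'(\pm W)=2\pi X(1\mp\epsilon)$, hence the factor $(1-\epsilon^2)^{-1}$ on the first boundary term plus a separate $\phi''$-boundary term of size $z/(kX^2)$, and you must then reassemble $\bigl(1-i\,z/(kX^2)\bigr)^{-1}$ perturbatively, discarding cross terms of size $O(\epsilon\,z/(kX^2))$. That reassembly is legitimate only when $z/(kX^2)=\epsilon\,\delta/(\pi X)=O(\epsilon)$, i.e.\ $\delta\lesssim X$---a hypothesis you correctly make explicit (and which holds automatically at the grid points $X_m=m\delta$, $m\neq 0$, where Proposition~\ref{prop:AsympWeights} is applied), whereas the paper's route sidesteps it for the prefactor because no expansion of $\bigl(1-i\,z/(kX^2)\bigr)^{-1}$ is ever performed (the paper's only dropped term is the final integral of order $\varepsilon^2/A$). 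So the paper's bookkeeping is cleaner and its prefactor exact, while your version makes the two-small-parameter error analysis, and the remainder bound $O\bigl(\epsilon\,z/(kX^2)\bigr)$ after two integrations by parts, fully explicit; your proposed cross-check against the closed-form expression for $\varphi$ in terms of the Fresnel integrals $C$ and $S$ from Section~\ref{sec:FresnelWeights} is a sound independent verification that the paper does not carry out.
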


\begin{proof}
Using the change of variable $\mu = \xi/W$ and letting $A = \pi X/\delta$ and
$\kappa = \pi^2 z/(k\delta^2)$, we can write
equation~\eqref{eq:LittlePhiFourier} as
\begin{equation}
  \phi(X) = \frac 12 \int_{-1}^1 \exp \big( -i (\kappa/2)\mu^2 \big) e^{iA\mu} \, d\mu.
\end{equation}
Integrating by parts twice, letting $\varepsilon = \pi z/(Xk\delta)$ (which by
assumption is going to zero), we have
\begin{align}
  \phi(X) &= \left[ \frac{e^{iA\mu}}{2iA} \exp \big( -i (\kappa/2)\mu^2 \big) \right|_{-1}^1 - \frac 1{2iA} \int_{-1}^1 (-i\kappa\mu) \exp \big( -i (\kappa/2)\mu^2 \big) e^{iA\mu} \, d\mu \\
    &= \frac{\sin A}{A} e^{-i\kappa/2} + \frac{\varepsilon}2 \int_{-1}^1 \mu \exp \big( -i (\kappa/2)\mu^2 \big) e^{iA\mu} \, d\mu \\
    &= \frac{\sin A}{A} e^{-i\kappa/2} + \frac{\varepsilon}2 \left[ \mu \exp \big( -i (\kappa/2)\mu^2 \big) e^{iA\mu} \right|_{-1}^1 - \frac{\varepsilon}{2iA} \int_{-1}^1 (1-i\kappa\mu^2) \exp \big( -i (\kappa/2)\mu^2 \big) e^{iA\mu} \, d\mu \\
    &= \frac{e^{-i\kappa/2}}{A} \big( \sin A - i \varepsilon \cos A \big) + i\frac{\varepsilon}{A} \phi(X) + \frac{\varepsilon^2}2 \int_{-1}^1 \mu^2 \exp \big( -i (\kappa/2)\mu^2 \big) e^{iA\mu} \, d\mu.
\end{align}
By integrating by parts again, we would see that the last integral above is a
term of order $\varepsilon^2/A$. Therefore, dropping that integral, rearranging
terms, and substituting the definitions of $\varepsilon$, $\kappa$, and $A$
leads to the desired result.
\end{proof}

The following result is a straightforward application of the above proposition
and equation~\eqref{eq:PhiProduct}, along with the fact that the Fresnel
diffraction integral is an asymptotic expression of the Rayleigh-Sommerfeld
integral in the specified regime.  

\begin{proposition}\label{prop:AsympWeights}
At the points $(X_m,Y_n) = (m\delta,n\delta)$ for integers $m$ and
$n$, the Rayleigh-Sommerfeld and Fresnel weights have the following asymptotic
behavior as $X_m^2 k/z \rightarrow \infty$ and $Y_n^2 k/z \rightarrow
\infty$.
\begin{align}
  \Phi(X_m,0)   &= -i (-1)^m \left( \frac z{kX_m^2} \right) e^{ikz} \exp \left( -i \frac{\pi^2 z}{2k\delta^2} \right) + \mathcal O \big( (X_m^2 k/z)^{-2} \big), \quad m \neq 0, \\
  \Phi(0,Y_n)   &= -i (-1)^n \left( \frac z{kY_n^2} \right) e^{ikz} \exp \left( -i \frac{\pi^2 z}{2k\delta^2} \right) + \mathcal O \big( (Y_n^2 k/z)^{-2} \big), \quad n \neq 0, \label{eq:AsymptoticWeight2} \\
  \Phi(X_m,Y_n) &=  \mathcal O \big[ (z/k)^2 \big( X_m^{-4} + X_m^{-2} Y_n^{-2} + Y_n^{-4} \big) \big], \quad m,n \neq 0. \label{eq:AsymptoticWeight3}
\end{align}
\end{proposition}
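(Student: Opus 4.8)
The plan is to reduce the whole statement to the one–dimensional asymptotics of Lemma~\ref{lem:Asymptotic} by exploiting the exact product structure $\Phi(X,Y)=e^{ikz}\varphi(X)\varphi(Y)$ of~\eqref{eq:PhiProduct}. Everything then hinges on evaluating the asymptotic formula for $\varphi$ at the grid abscissae $X=X_m=m\delta$ and $Y=Y_n=n\delta$, together with the size of $\varphi(0)$. The arithmetic that makes this clean is that the phase $A=\pi X/\delta$ appearing in the lemma becomes $A=\pi m$ at a grid point, so that $\sin(\pi X_m/\delta)=\sin(\pi m)=0$ and $\cos(\pi X_m/\delta)=(-1)^m$.

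For the first two identities I would substitute $A=\pi m$ into the conclusion of Lemma~\ref{lem:Asymptotic}. The dominant $\sin A/A$ contribution vanishes identically at the grid points, so the leading surviving term is the one carrying $\varepsilon\cos A$, with $\varepsilon=\pi z/(X_mk\delta)$. After simplifying the prefactor $\delta/(\pi X_m)=1/A$ this gives
\begin{equation}
  \varphi(X_m)=-i(-1)^m\,\frac{z}{kX_m^2}\,\exp\!\left(-i\frac{\pi^2z}{2k\delta^2}\right)+(\text{error}),
\end{equation}
and the same with $X_m$ replaced by $Y_n$. Since $\varphi(0)$ is the zero–frequency value of the diffracted cardinal function, in the large–Fresnel regime it tends to unity; indeed $\varphi(X)\to\sinc(X/\delta)$ as $z\to0$, so $\varphi(0)=1+\mathcal O(\kappa)$ with $\kappa=\pi^2z/(k\delta^2)$ small. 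Multiplying $\varphi(X_m)$ by $e^{ikz}\varphi(0)$ then produces precisely the leading terms of $\Phi(X_m,0)$ and $\Phi(0,Y_n)$.

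The crux is the sharp error order $\mathcal O((X_m^2k/z)^{-2})$. Here I would note that the bound $\mathcal O(\varepsilon^2)$ recorded in the \emph{statement} of Lemma~\ref{lem:Asymptotic} is too coarse: multiplied by the prefactor $1/A$ it only yields $\mathcal O(m\,(X_m^2k/z)^{-2})$, which is a factor of $m$ too large. However, the \emph{proof} of the lemma actually establishes the finer estimate that the dropped integral is of order $\varepsilon^2/A$, not merely $\varepsilon^2$. Using this finer bound, the error in $\varphi(X_m)$ is $\mathcal O(\varepsilon^2/A^2)$, and substituting $A=\pi X_m/\delta$ and $\varepsilon=\pi z/(X_mk\delta)$ turns this into exactly $\mathcal O((X_m^2k/z)^{-2})$. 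The remaining correction from the factor $(1-iz/(kX_m^2))^{-1}=1+\mathcal O(z/(kX_m^2))$ contributes a term $-i(-1)^m(z/(kX_m^2))\,\mathcal O(z/(kX_m^2))$, which is of the same order, completing the error estimate for the first two displays.

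For the third identity both factors are now in the off–grid decay regime, $\varphi(X_m)=\mathcal O(z/(kX_m^2))$ and $\varphi(Y_n)=\mathcal O(z/(kY_n^2))$, so $\Phi(X_m,Y_n)=e^{ikz}\varphi(X_m)\varphi(Y_n)=\mathcal O((z/k)^2X_m^{-2}Y_n^{-2})$; since $X_m^{-2}Y_n^{-2}\le X_m^{-4}+Y_n^{-4}$ this is bounded by the symmetric expression in~\eqref{eq:AsymptoticWeight3}. Finally, the Rayleigh–Sommerfeld weights obey the same formulas once one invokes the paraxial approximation: for in–band frequencies the expansion $\sqrt{k^2-4\pi^2(\xi^2+\eta^2)}\approx k-2\pi^2(\xi^2+\eta^2)/k$ gives $\widehat h_{RS}\approx\widehat h_F$, so $\Phi_{RS}$ shares the leading Fresnel asymptotics. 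I expect the delicate point throughout to be the bookkeeping of the two independent small parameters $z/(kX_m^2)$ and $\kappa$, and in particular the extraction of the sharp $1/A$ gain from the lemma's remainder rather than the coarser bound quoted in its statement.
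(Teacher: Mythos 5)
Your overall route coincides with the paper's: the paper disposes of this proposition in one sentence, as a ``straightforward application'' of Lemma~\ref{lem:Asymptotic} and the product formula~\eqref{eq:PhiProduct}, with the Fresnel-approximates-Rayleigh-Sommerfeld remark covering the RS weights. Your substitution $A=\pi m$, $\sin(\pi m)=0$, $\cos(\pi m)=(-1)^m$, the expansion $(1-iz/(kX_m^2))^{-1}=1+\mathcal O(z/(kX_m^2))$, and the bound $X_m^{-2}Y_n^{-2}\le X_m^{-4}+Y_n^{-4}$ for the $m,n\neq 0$ case are exactly that computation carried out. Your explicit treatment of $\varphi(0)=1+\mathcal O(\kappa)$ is a detail the paper absorbs silently; it is legitimate because demanding the asymptotics already at $m=\pm 1$ forces $\kappa=\pi^2 z/(k\delta^2)\to 0$.

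However, your ``crux'' paragraph contains a genuine error. In Lemma~\ref{lem:Asymptotic} the remainder $\mathcal O(\varepsilon^2)$ sits \emph{inside} the bracket that is multiplied by the prefactor $\delta/(\pi X)=1/A$, so the statement already asserts an error of size $\varepsilon^2/A$ in $\varphi$ itself---which is precisely the proof's estimate that the dropped term $\tfrac{\varepsilon^2}{2}\int_{-1}^1\mu^2\exp(-i(\kappa/2)\mu^2)e^{iA\mu}\,d\mu$ is of order $\varepsilon^2/A$ (that term enters the equation for $\varphi$ directly, with no outer $1/A$). There is no finer hidden estimate in the proof: statement and proof give the same bound, and multiplying $\varepsilon^2/A$ by $1/A$ a second time double-counts the prefactor, so your claimed $\mathcal O(\varepsilon^2/A^2)$ error for $\varphi(X_m)$ does not follow from anything written. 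Two correct repairs exist. First, read the proposition, as is standard, at fixed $m$ with $X_m^2k/z\to\infty$: then $\varepsilon^2/A=\pi m\,(z/(kX_m^2))^2$ and the factor $\pi m$ is absorbed into the implied constant, so the lemma as stated already yields the claimed order and the coarseness you worried about never arises. Second, if one wants the bound uniformly in $m$ (which the cutoff construction of Section~\ref{sec:KernelTrunc} genuinely needs), one must integrate the remainder integral by parts once more and observe that its leading boundary term is again proportional to $\sin A/A$, which vanishes at the grid points $A=\pi m$---the same cancellation that killed the main term---leaving a true $\mathcal O\big(\varepsilon^2(1+\kappa)/A^2\big)$ at grid points. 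Your proposal asserts the sharp order without performing this step, so as written the key error estimate is unjustified even though the final formulas are correct.
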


These asymptotic expressions for the Fresnel quadrature weights can be used
to estimate a cutoff number $M$ such that the weights $w_{m-j,n-\ell} =
\Phi(X_{m-j},Y_{n-\ell})$ for $m-j$ and $n-\ell$ greater than $M$ can
effectively be set to zero subject to a prescribed error tolerance. The effect
of this truncation is to turn the matrices $\boldsymbol \omega^x$ and
$\boldsymbol \omega^y$ from equation~\eqref{eq:OmegaMatrices} into banded
matrices.

We note that the ASM can produce fairly accurate results when the Fresnel
number is large---i.e., when the propagation distance is small enough and the
domain large enough so that most of the transverse waves do not yet reach the
edge of the domain.  However, using the ASM in this case will still be less
efficient than the sinc method since it is precisely in this regime that
Proposition~\ref{prop:AsympWeights} applies and the banded matrix
multiplication algorithm for the sinc method will outperform even an FFT-based
algorithm.

\section{Numerical results} \label{sec:NumericalResults}

In this section, we perform a convergence analysis on a problem for which there
is an exact solution, Fresnel propagation of a Gaussian beam, and we compare
numerical results obtained using the ASM and the sinc method. We also compare
the methods qualitatively for Fresnel and Rayleigh-Sommerfeld diffraction of
the circular aperture problem.

\subsection{Gaussian beam} \label{sec:GaussianBeam}
%
\begin{figure}[htb!]
  \begin{center}
    \subfigure[]{\label{fig:GauBeam-1}
    \includegraphics[angle=0,width=\twoFigWidth\textwidth]{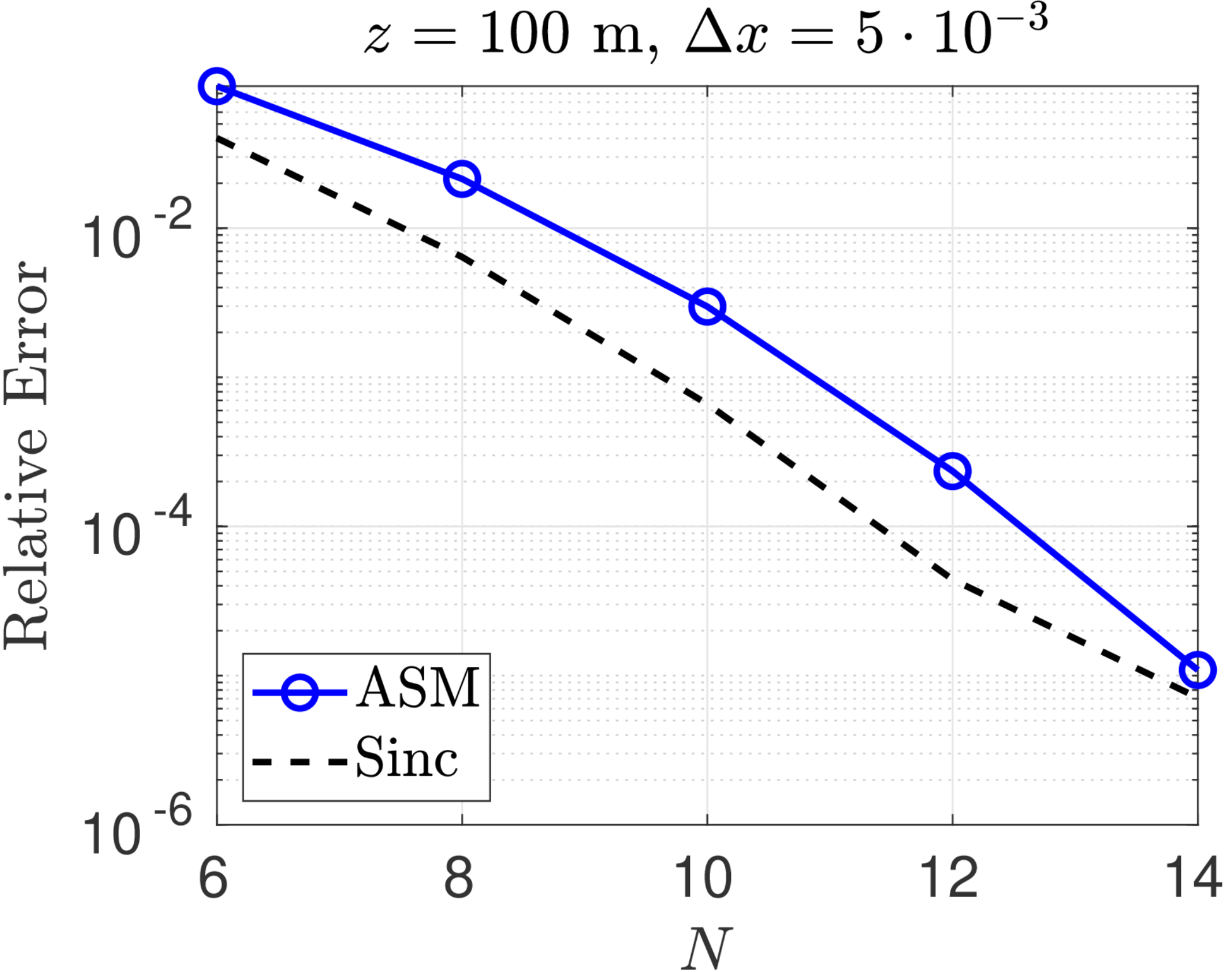}}
    \subfigure[]{\label{fig:GauBeam-2}
    \includegraphics[angle=0,width=\twoFigWidth\textwidth]{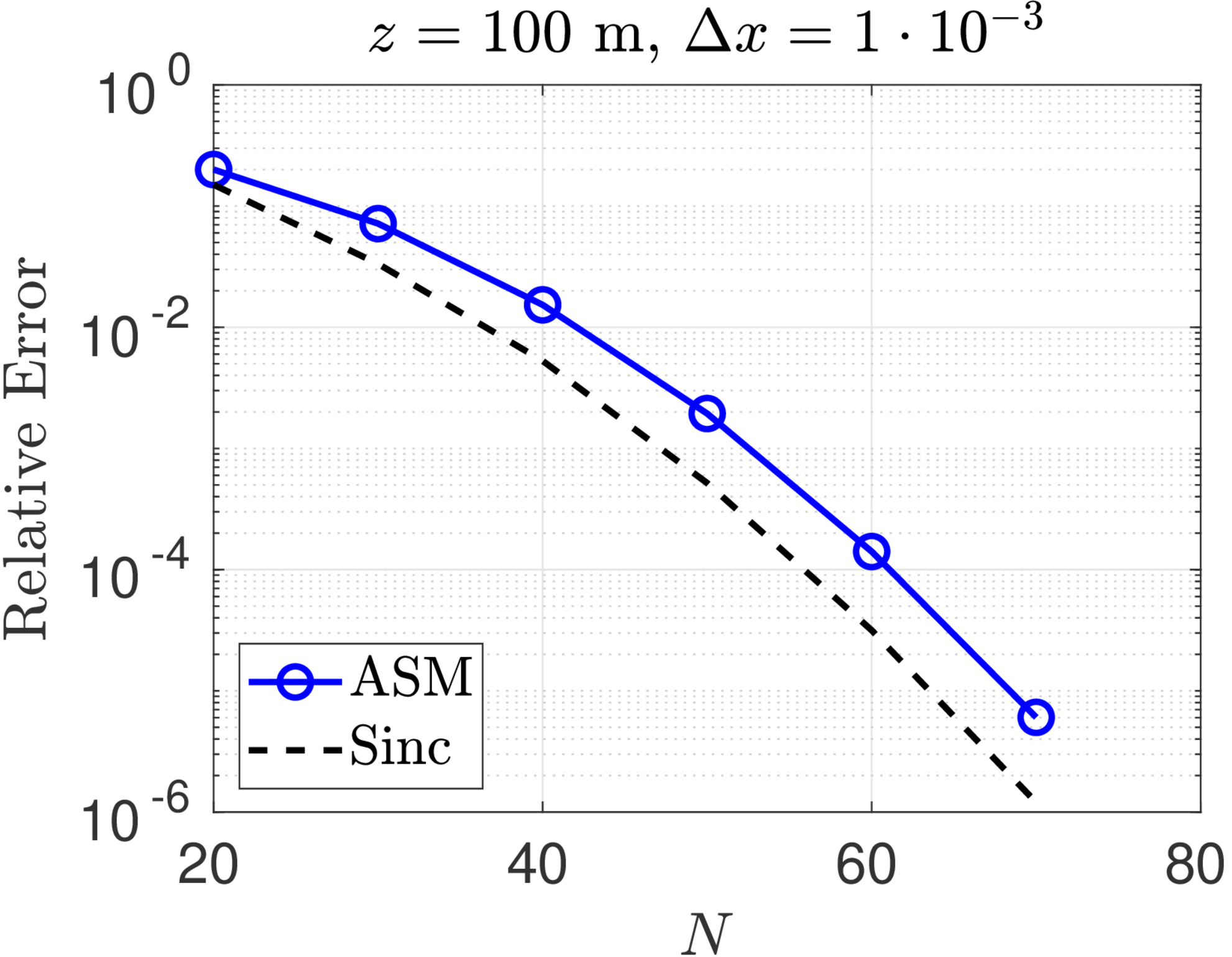}}
    \subfigure[]{\label{fig:GauBeam-3}
    \includegraphics[angle=0,width=\twoFigWidth\textwidth]{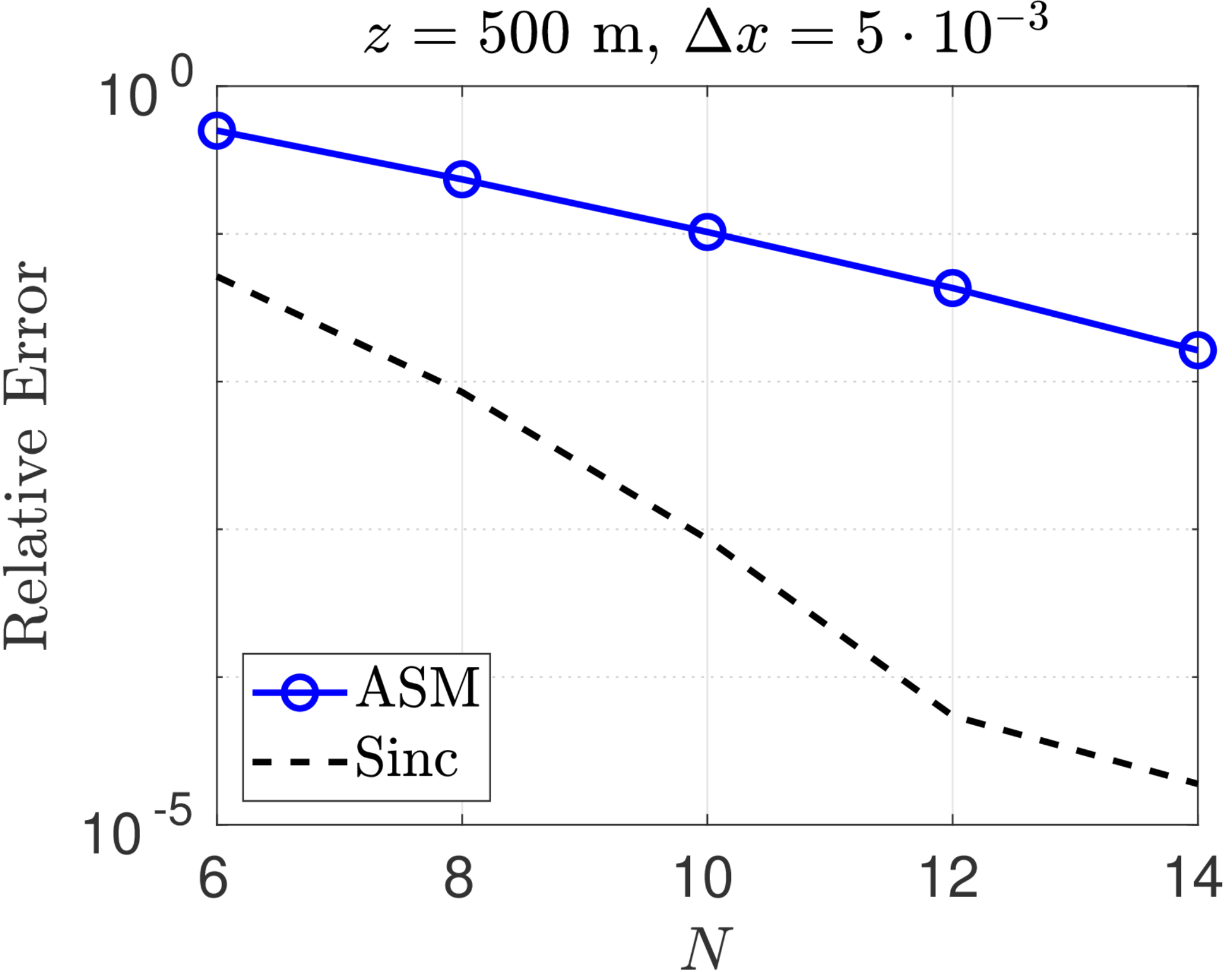}}
    \subfigure[]{\label{fig:GauBeam-4}
    \includegraphics[angle=0,width=\twoFigWidth\textwidth]{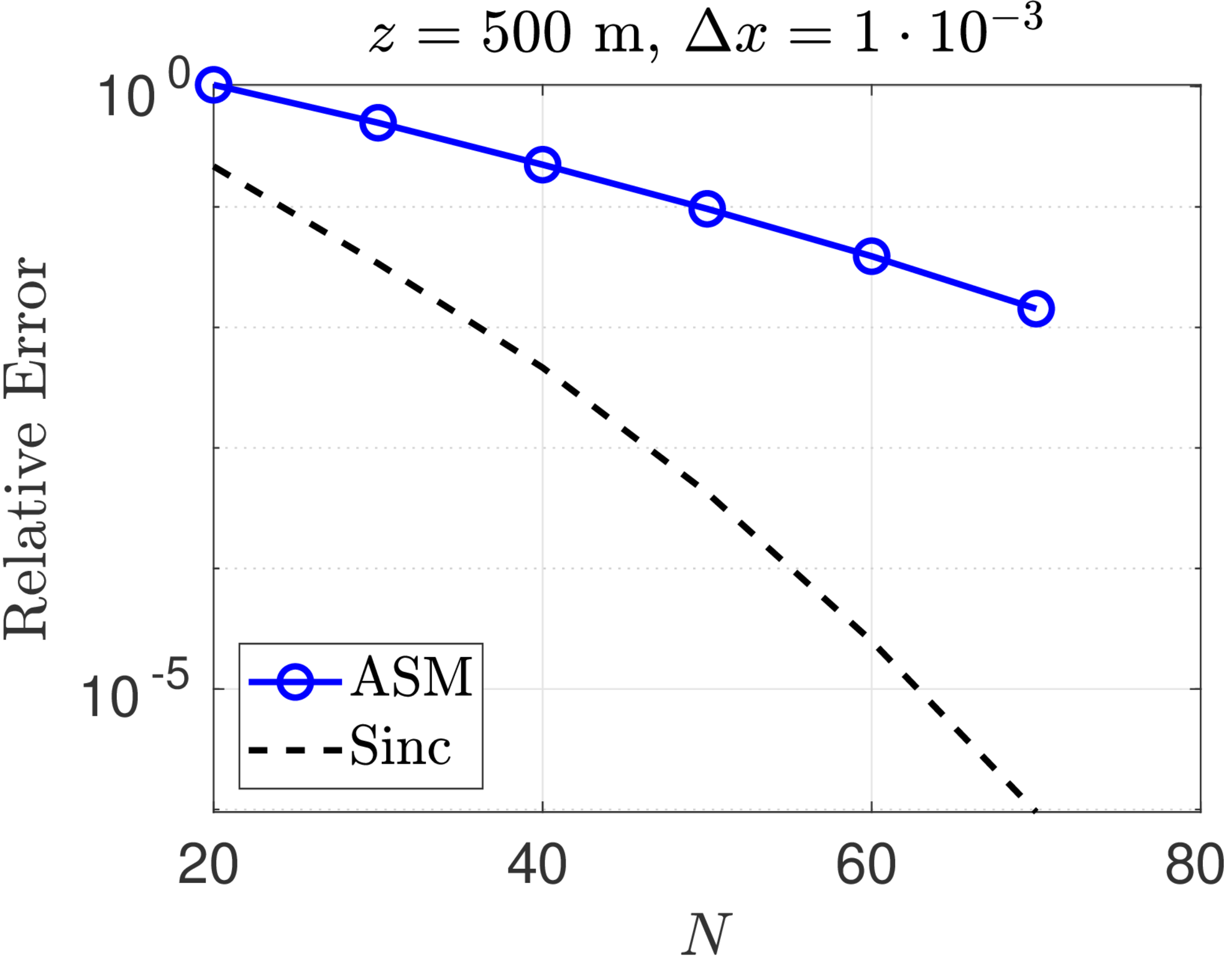}}
    \subfigure[]{\label{fig:GauBeam-5}
    \includegraphics[angle=0,width=\twoFigWidth\textwidth]{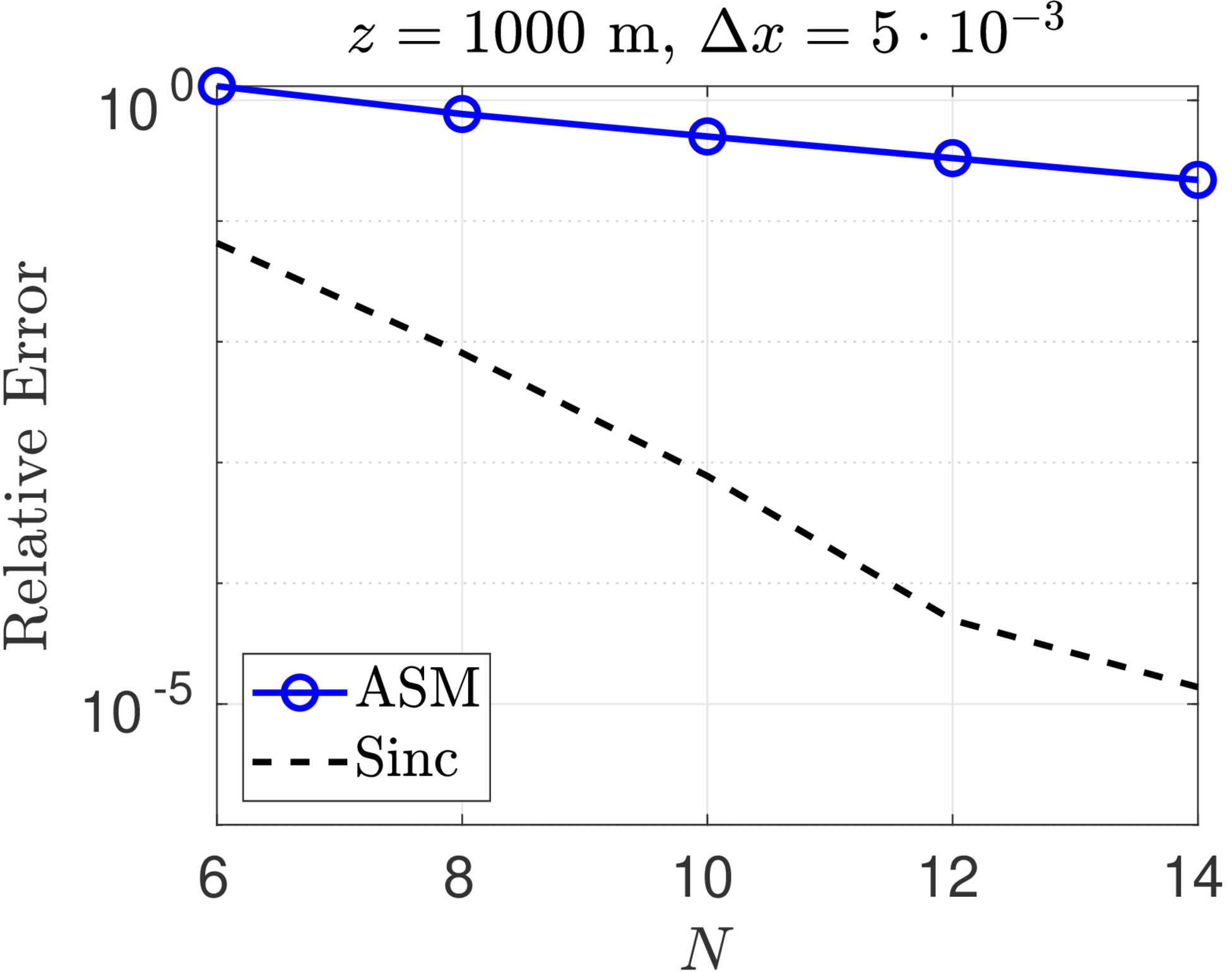}}
    \subfigure[]{\label{fig:GauBeam-6}
    \includegraphics[angle=0,width=\twoFigWidth\textwidth]{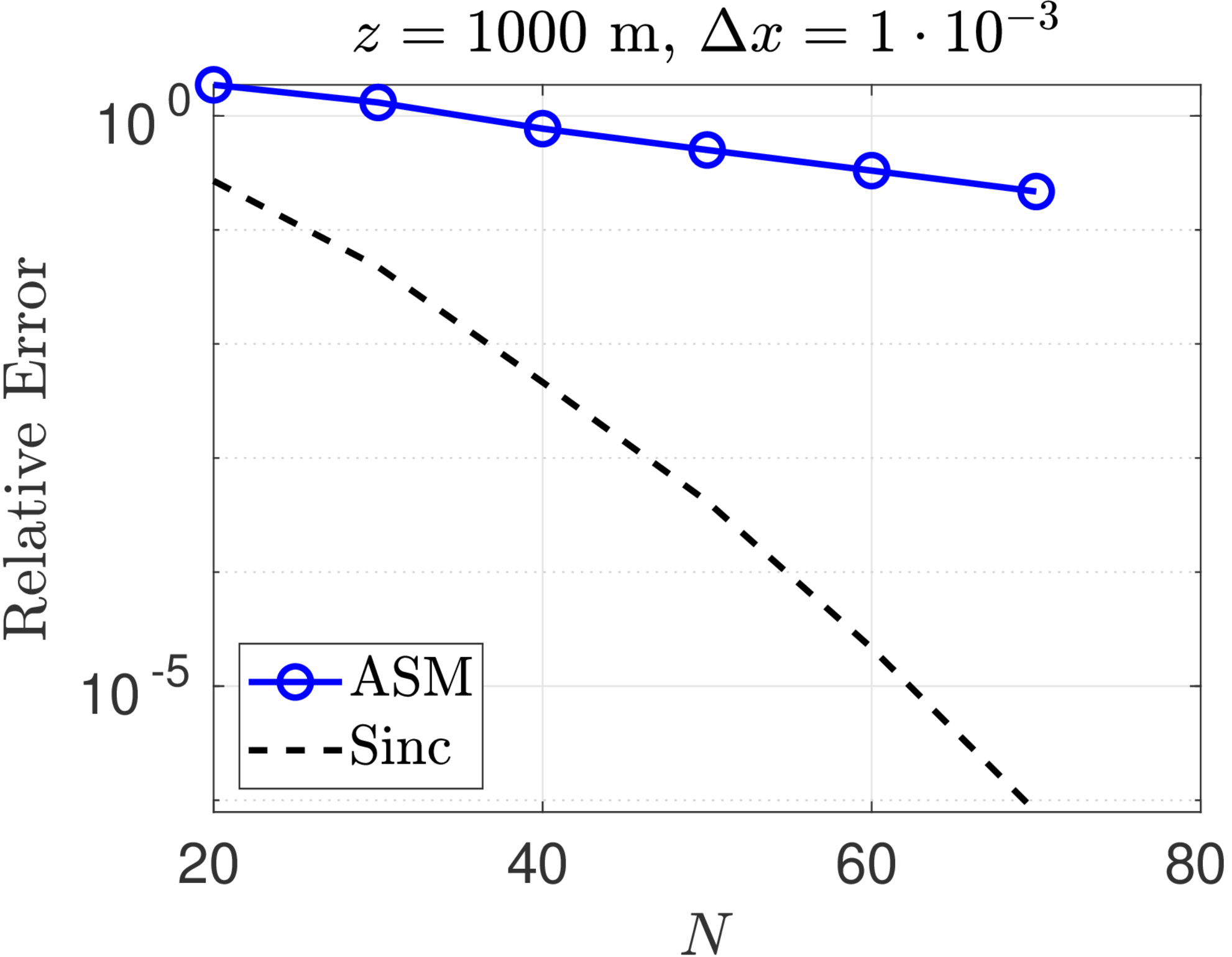}}
  \end{center}
  \caption{Convergence of the ASM and sinc method for Fresnel diffraction of a
           Gaussian beam. The plots show relative errors versus number of points on one
           side of an $N \times N$ grid. The propagation distance $z$ and spatial sample
           spacing $\Delta x$ are indicated above each plot.} \label{fig:GauBeamRelErr}
\end{figure}
We consider the optical propagation of a Gaussian beam with unit amplitude and
initial waist radius $w_0$,
\begin{equation}
  u(x,y) = \exp \left( -\frac{x^2 + y^2}{w_0^2} \right).
\end{equation}
Its Fourier transform is given by
\begin{equation}
  \widehat u(\xi,\eta) = \pi w_0^2 \exp \left( -\pi^2 w_0^2 (\xi^2 + \eta^2) \right),
\end{equation}
and the exact solution to the Fresnel diffraction integral is given by the
formula
\begin{align}
  U(X,Y) &= \frac 1{\sqrt{1 + \Big( \frac{2z}{kw_0^2} \Big)^2}} 
    \exp \left\{ -\frac 1{w_0^2}\, \frac{X^2 + Y^2}{1 + \Big( \frac{2z}{kw_0^2} \Big)^2}
      + ikz - i \arctan \left( \frac{2z}{kw_0^2} \right)
      + i \frac{k}{2z}\, \frac{X^2 + Y^2}{1 + \Big( \frac{kw_0^2}{2z} \Big)^2} \right\}.
\end{align}
Figure~\ref{fig:GauBeamRelErr} shows convergence plots for the sinc method and
the ASM for Fresnel diffraction of a Gaussian beam with wavelength $\lambda =
1\,\mu\textrm{m}$ and beam radius $w_0 = 1\,\textrm{cm}$. Three different
propagation distances, $z\in \{100,\, 500, \, 1000\}$ m, and two sample
spacings, $\Delta x \in \{ 1\cdot 10^{-3},\, 5\cdot 10^{-3}\}$ m, were chosen
for this study. The exact solution is computed at every point of an observation
plane grid of $N\times N$ points and the $2$-norm relative error is computed
for both the ASM and the sinc numerical solutions.  The errors for the ASM are
competitive with (but still greater than) the sinc method at shorter
propagation distances. As the analysis predicted, however, ASM performs worse
at longer distances. On the other hand, for fixed $\Delta x$, the convergence
of the sinc method is almost identical for all propagation distances.

\subsection{Circular aperture} \label{sec:CircAper}

Next, we consider a diffraction problem for a circular aperture using both the
Fresnel and Rayleigh-Sommerfeld integrals. Unfortunately, in this case, there
is no closed-form solution, even for the simpler Fresnel case.

The problem setup is as follows (all units are in meters). The field in the source
plane is given by
\begin{equation}\label{eq:SphWave}
 u(x,y,z) = A \cfun\left( \frac{r}{r_0} \right) 
          \frac{ e^{ik\sqrt{ (x-x_0)^2 + (y-y_0)^2 +(z-z_0)^2 } } }
                       { \sqrt{ (x-x_0)^2 + (y-y_0)^2 +(z-z_0)^2 } }, \qquad
  \cfun( r ) \coloneqq
    \begin{cases}
      1, &r \leq 1, \\
      0, & \textrm{otherwise},
    \end{cases}
\end{equation}
where the wavelength $\lambda = 10\,\mu\textrm{m}$, $r = \sqrt{x^2+y^2}$, $r_0 = 10^{-3}$ is the circular aperture radius, $A = 3\cdot 10^{-2}$ is the source amplitude, and $(x_0,y_0,z_0) = (0,0,-3\cdot
10^{-2})$ so that the source is placed slightly behind the an opaque screen at
$z=0$. The source $u$ is evaluated over a planar grid with $N\times N$ points
for $(x,y) \in [-2\cdot 10^{-3}, 2\cdot 10^{-3}]^2$, with $N = 400$ in all
cases, which we then propagate using each of the diffraction
integrals~\eqref{eq:FresnelIntegral} and~\eqref{eq:RayleighSommerfeld}.

Figure~\ref{fig:CircAperIrrad} shows irradiance patterns ($|U|^2$) for Fresnel
and Rayleigh-Sommerfeld diffraction, computed with the ASM and sinc methods,
for a circular aperture over a uniform $N\times N$ planar grid at $z = 0.015$.
Because the source field of the aperture problem is discontinuous, we expect
that both the Fourier series and sinc series approximations will suffer from
the Gibbs phenomenon. However, recalling the discussion of group velocity in
Section~\ref{sec:FresnelFourier}, the highest frequency modes are quickly
dispersed off to infinity so that the solution becomes smoother at longer
propagation distances. This behavior is evident in the solution given by the
sinc method; on the other hand, ASM retains all the high frequency
oscillations, which are clearly visible in the pseudocolor plots of both the
Fresnel and Rayleigh-Sommerfeld cases, regardless of propagation distance.
(Pseudocolor plots were generated using the visualization software
VisIt~\cite{HPV_VisIt}.)
\begin{figure}[htb!]
    \centering                                                               
    \includegraphics[width=0.8\textwidth]{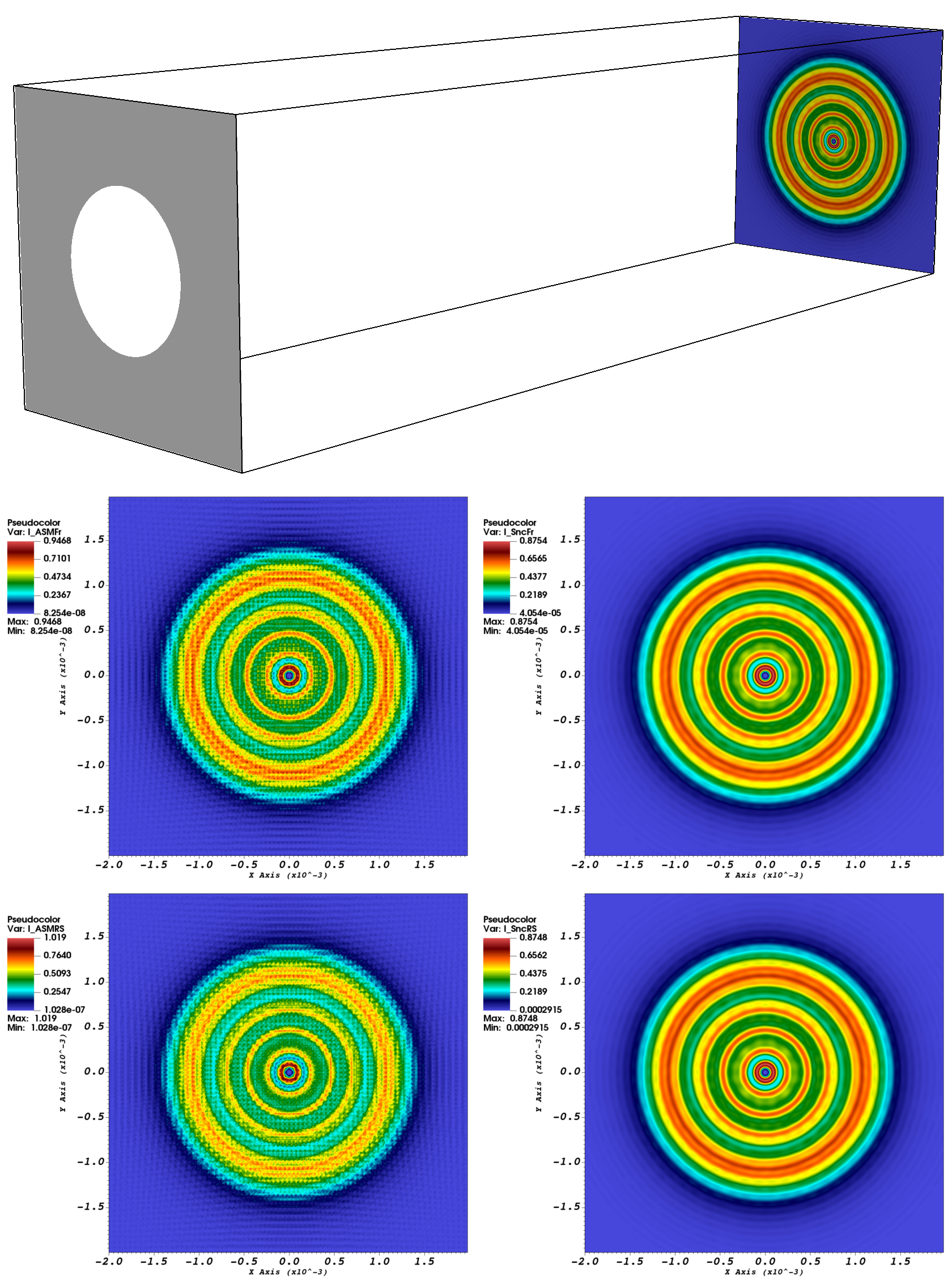}
    \caption{Irradiance plots ($|U|^2$) for Fresnel and Rayleigh-Sommerfeld (RS) diffraction
             of a circular aperture of radius $r_0 = 10^{-3}$ m at $z=0$ illuminated by a 
             spherical wave located at $(x_0,y_0,z_0) = (0,0,-3\cdot 10^{-2})$
             m. The Fresnel (\texttt{I\_ASMFr}) and RS (\texttt{I\_ASMRS}) ASM irradiance
             at $z = 0.015$ m are shown in the left column plots. The corresponding sinc results 
             (\texttt{I\_SncFr}) and (\texttt{I\_SncRS}) are displayed in the right column.} 
             \label{fig:CircAperIrrad}
\end{figure}         

\section{Conclusion}

We presented a method based on sinc approximations to compute
Rayleigh-Sommefeld (RS) and Fresnel diffraction integrals. We compared our
approach with the standard method used to evaluate these optical wave
propagation integrals, the angular spectrum method (ASM), and we showed that
the ASM introduces artificial periodicity into the domain. The ASM's artificial
periodicity, in turn, leads to errors in the optical fields which grow as the
propagation distance increases.  On the other hand, our sinc-based approach
does not impose artifical periodic boundary conditions and instead correctly
disperses transverse modes to infinity. Additionally, unlike the ASM, the sinc
method preserves the bandwidth of the underlying convolution and the accuracy
of the propagated field depends only on the approximation accuracy of the
source field and is independent of wavelength, propagation distance, and
observation plane discretization.  Like the ASM, our sinc method can also be
formulated as a discrete convolution over uniform grids and can thus be
computed using FFTs.  When the Fresnel number is large, only a few terms of the
sinc quadrature are necessary to achieve a prescribed error tolerance; in this
case, the sinc-based approach simplifies to an even faster $O(N)$ algorithm.
For the Fresnel case, the set of sinc integration weights was given
analytically and numerical results confirmed that the sinc approach achieves
high-order accuracy for both short-range and long-range propagation.  Similar
results were obtained with the sinc-based method for RS diffraction through a
circular aperture even though in that case the sinc integration weights were
computed numerically.  As expected, for a fixed discretization resolution, the
ASM error confirmed that the solution deteriorates with increasing propagation
distance.

\section*{Acknowledgements}

This work was supported by the Air Force Office of Scientific Research [grant number 20RDCOR016].

\section*{Disclosures}

Approved for public release; distribution is unlimited. Public Affairs release
approval number AFRL-2021-3954.

\bibliography{refs}

\end{document}